\newtheorem{observation}{Observation}
\newtheorem{theorem}{Theorem}
\newtheorem{lemma}{Lemma}
\title{Embedding Four-directional Paths on Convex Point Sets\thanks{%
A shorter version of this paper will appear at Graph Drawing 2014~\cite{ahlmv-efdpc-14}.
%
% --\/-- below the reqired acknowledgement text by Springer --\/--
% The final publication is available at \tt{link.springer.com}. 
%
O.A. supported by the ESF EUROCORES programme EuroGIGA - ComPoSe,
Austrian Science Fund (FWF): I 648-N18.
T.H.\ supported by the Austrian Science Fund (FWF): P23629-N18 `Combinatorial Problems on Geometric Graphs'.
}}
\author{Oswin Aichholzer\thanks{Institute for Software Technology, Graz University of Technology, Austria, {\tt [oaich|thackl|apilz|bvogt]@ist.tugraz.at}.}
 \and Thomas Hackl\footnotemark[2]
 \and Sarah Lutteropp\thanks{Institute of Theoretical Informatics, Karlsruhe Institute of
Technology, Germany \tt{sarah.lutteropp@student.kit.edu,
  mched@iti.uka.de}.}
 \and Tamara Mchedlidze\footnotemark[3]
 \and Birgit Vogtenhuber\footnotemark[2]}
\begin{document}

\maketitle

\begin{abstract}
A directed path whose edges are assigned labels ``up'', ``down'', ``right'', or ``left'' is called \emph{four-directional}, and \emph{three-directional} if at most three out of the four labels are used.
A \emph{direction-consistent embedding} of an \mbox{$n$-vertex} four-directional path $P$ on a set $S$ of $n$ points in the plane
is a straight-line drawing of $P$ where each vertex of $P$ is mapped to a distinct point of $S$ and every edge points to the direction specified by its label.
We study planar direction-consistent embeddings of three- and four-directional paths and provide a complete picture of the problem for convex point sets.
\end{abstract}

\section{Introduction}
\label{sec:introduction}
In 1974, Rosenfeld proved that every tournament has a spanning \emph{antidirected} path~\cite{Rosenfeld74} and conjectured that there exists an integer $n_0$ such that every tournament with more than $n_0$ vertices contains every \emph{oriented path} as a spanning subgraph. A tournament is a digraph whose underlying undirected structure is a complete graph and an oriented path is a digraph whose underlying undirected structure is a simple path.
An oriented path is antidirected if  the directions of its edges alternate. During the following decade several simplifications of  Rosenfeld's conjecture had been shown to be true. Alspach and Rosenfeld~\cite{AlpachR81} and Straight~\cite{Straight80} settled the conjecture for oriented paths with either a single source or a single sink. Forcade~\cite{Forcade73} proved the conjecture to be true for every tournament whose size is a power of two. Reid and Wormald\cite{ReidW83} showed that any tournament of size $n$ contains every oriented path of size $2n/3$ and Zhang~\cite{Zhang85} improved this result to $n-1$. Finally, in 1986, the conjecture was established by Thomason~\cite{thomason86}.

More than two decades later, with the expansion of Geometric Graph Theory and Graph Drawing, a geometric counterpart of Rosenfeld's conjecture was considered. The subject of this study is an \emph{upward geometric} tournament, that is, a tournament drawn on the plane with straight-line edges so that each edge points in the upward direction. It was asked whether an upward geometric tournament contains a planar copy of any oriented path~\cite{BinucciGDEFKL10}. Despite several independent approaches to attack the problem by different research groups, this question is still unsolved.   However, it 
was answered in the affirmative for several special cases of paths and tournaments. We use the following definitions to list these results. A vertex of a digraph which is either a source or a sink is called a \emph{switch}.  An oriented path whose edges are all oriented in the same direction is called \emph{monotone}. For the following cases it was shown that every upward tournament contains a planar copy of each oriented path: the vertices of the tournament are in convex position~\cite{BinucciGDEFKL10},
the oriented path has at most $3$ switches~\cite{BinucciGDEFKL10}, the oriented path has at most $5$ switches and at least two of its
monotone subpaths contain a single edge~\cite{AngeliniFGKMS10}, the oriented path where every sink is directly followed by a source~\cite{BinucciGDEFKL10}. It was also shown that each oriented path of size $n$ is contained in any upward geometric tournament of size $n 2^{k-2}$, where $k$ is the number of switches~\cite{AngeliniFGKMS10}. This result was later improved to $(n\!-\!1)^2\!+\!1$ in~\cite{Mchedlidze14}.
Recently, with the help of a computer, we could verify that every upward geometric tournament of size $10$ contains a planar copy of any oriented path as a spanning subgraph. This was done by exhaustive testing of all distinct directed order types, that is, all order types~\cite{Aichholzer01} with an additional combinatorial upward direction.

The question whether any upward geometric tournament contains a planar copy of any oriented path was originally stated in terms of so-called \emph{point set embeddings}. Here we are given a set $S$ of $n$ points in the plane and a planar $n$-vertex graph $G$, and we are asked to determine whether $G$ has a planar straight-line drawing where each vertex of~$G$ is mapped to a distinct point of $S$. This problem has been extensively studied and many exciting facts were established, see for example~\cite{BannisterCDE14,BiedlV12,Cabello06,DurocherM12,GritzmannMPP91}. In the upward counterpart of point set embeddings, $G$ is an upward planar digraph and the obtained drawing is additionally required to be upwards oriented. Such a drawing, if it exists, is called an \emph{upward point set embedding}.
Upward point set embeddings have been studied for different classes of digraphs~\cite{AngeliniFGKMS10,BannisterDE13,BinucciGDEFKL10,KaufmannMS13}.
Observe that the question whether any upward geometric tournament contains a planar copy of any oriented path is equivalent to asking whether any oriented path has an upward planar embedding on any set of $n$ points. We will refer to the latter as the \emph{oriented path question}.

The number of distinct plane embeddings of an (undirected) spanning path on a point set could provide us some additional evidence for the oriented path question. It is not difficult to see that if $S$ is a set of $n$ points in convex position, then it admits $n2^{n-3}$ distinct plane spanning path embeddings. Further it is known that this is the minimum number of distinct plane spanning path embeddings that a point set can admit, i.e., convex point sets minimize this number~\cite{ahhhkv-npg-06a}. Comparing this lower bound with the number of distinct oriented paths, which is $2^{n-1}$, it sounds even surprising that every oriented path has an upward planar embedding on every convex point set~\cite{BinucciGDEFKL10}. In order to approach the oriented path question in its general form, we aim to understand better how the nature of the problem changes when in addition to planarity of a path one requires its upwardness. To this end, we generalize the oriented path problem with respect to the number of considered directions (see Section~\ref{sec:definitions} for a rigorous definition). Observe that, instead of considering an oriented path, one can consider a monotone path with labels on edges that declare whether an edge is required to point up or down. In this work we study monotone paths with four possible labels on the edges: up, down, left, and right. We call such paths \emph{four-directional}, and \emph{three-directional} if at most three out of the four labels are used. An embedding of such a path on a point set where each edge points into the direction specified by its label is called \emph{direction-consistent}.
We study planar direction-consistent embeddings of three- and four-directional paths on convex point sets. Recall that convex point sets are extremal in the sense that they minimize the number of plane embeddings of (undirected) spanning paths. We provide a complete picture regarding four-directional paths and convex point sets. Our results are as follows:
\begin{itemize}
\item Every three-directional path admits a planar direction-consistent embedding on any convex point set.
\item There exists a four-directional path $P$ and a one-sided\footnote{A convex point set is called \emph{one-sided} if all of its points lie on the same side of the line through its bottommost and topmost points.}	convex point set $S$ such that $P$ does not admit a  planar direction-consistent embedding on $S$.
On the other hand, a four-directional path always admits a planar direction-consistent embedding for special cases of one-sided point sets, namely so-called quarter-convex point sets.

\item Given a four-directional path $P$ and a convex point set $S$, it can be decided in $O(n^2)$ time whether $P$ admits a planar direction-consistent embedding on $S$.
\end{itemize}
Our study is also motivated by applications similar to those of upward point set embeddings, i.e., any situation where a hierarchical structure must be represented and additional constraints on the positions of vertices are given. Our scenario, where instead of two directions the edges can point into four directions, allows for a more detailed control over a drawing.

The remainder of the paper is organized as follows. In Section~\ref{sec:definitions}, we give the necessary definitions. In Section~\ref{sec:preliminaries}, we prove several preliminary results which are utilized in our main Section~\ref{sec:three-dir}, where the existence of a planar direction-consistent embedding of a three-directional path on a convex point set is shown. All results on four-directional paths are concentrated in Section~\ref{sec:four-dir}. 
%
%The proofs, omitted due to the space constraints, can be found in Appendix.
% The proofs that have been omitted due to the space constraints can be found in the
% full version~\cite{ahlmv-efdpc-14X}.

\section{Definitions}
\label{sec:definitions}

\paragraph{\bf Graphs}
The graphs we study in this paper are directed and we denote by $(u,v)$ an edge directed from $u$ to $v$. A directed edge when drawn as a straight-line segment is said to \emph{point up} or being \emph{upward}, if its source is below its sink. Similarly we define the notions of pointing \emph{down}, \emph{left}, and \emph{right}.
Our study concentrates on directed paths each edge of which is assigned one of four labels $U,D,L,R$, which means that (when the path is embedded on a point set) this edge is required to point up, down, left, or right, respectively. For simplicity, we will denote such a path containing vertices $v_1,\dots,v_n$ by $P=d_1,\dots,d_{n-1}$, where $d_i \in \{U,D,L,R\}$, $1\leq i \leq n-1$.
Let $T \subseteq \{U,D,L,R\}$. If $d_i \in T$, $1\leq i \leq n-1$, then $P$ is called \emph{$T$-path} and \emph{$|T|$-directional path} in order to emphasize the number of directions it contains. We denote by $P_{i,j}=d_i,\dots,d_j$, $1\leq i\leq j \leq n-1$, a subpath of~$P$. In addition, we define $P_{i,i-1}=v_i$.

\paragraph{\bf Point sets}
We say that a set $S$ of points in the plane is in \emph{general position} if no three points are collinear and no two points have the same $x$- or $y$-coordinate. All point sets mentioned in this paper are in general position.
Let $S$ be a convex point set. We denote by $\ell(S)$, $r(S)$, $t(S)$, $b(S)$ the leftmost, the rightmost, the topmost, and the bottommost point of $S$, respectively.
A subset of points of $S$ is called (\emph{clockwise}) \emph{consecutive} if its points appear consecutively as we (clockwise) traverse the convex hull of $S$.

A convex point set $S$ is called \emph{left-sided (resp. right-sided)} if $t(S)$ and $b(S)$ (resp. $b(S)$, $t(S)$) are clockwise consecutive on $S$. Further, $S$ is called \emph{one-sided} if $S$ is left-sided or right-sided. Finally, $S$ is called \emph{strip-convex} if $(i)$ the points $b(S)$ and $\ell(S)$ are either consecutive or coincide, and $(ii)$ the points $t(S)$ and $r(S)$ are either consecutive or coincide.
For $p,q \in S$, the points of $S$ which lie between the vertical lines through $p$ and $q$ (including them) are said to be \emph{vertically between} $p$ and $q$.

\paragraph{\bf Embeddings} Let $P$ be an $n$-vertex path (labeled) with vertex set $V(P)$ and $S$ be a set of $n$ points in general position. An \emph{embedding} of $P$ on $S$ is an injective function $\mathcal{E} \colon V(P) \to S$. If the edges of $P$ are drawn as  straight-line segments connecting corresponding end-vertices, the embedding $\cal E$ yields a drawing of $P$.  We say that the embedding $\cal E$ is \emph{planar} if this drawing is planar. We say that $\cal E$ is \emph{direction-consistent} if each edge points to the direction corresponding to its label. Planar direction-consistent embeddings are abbreviated by PDCE. During the construction of an embedding, a point $p$ is called \emph{used} if a vertex has already been mapped to it. Otherwise, $p$ is called \emph{free}. Throughout the paper we consider embeddings of $n$-vertex paths on sets of $n$ points, unless explicitly stated differently.

\paragraph{\bf Operations with paths, point sets, and embeddings}

Let $T \subseteq \{U,D,R,L\}$ and consider a $T$-path $P=d_1d_2\dots d_{n-1}$.
Let $S$ be a set of $n$ points and let $\cal E$ be a direction-consistent embedding of $P$ on $S$. Observe that $\cal E$ describes a direction-consistent embedding of another path $P^{\cal I}$ on the same point set $S$. Path $P^{\cal I}$ is called the \emph{reverse} path of $P$, and is constructed by reversing the directions of the edges of $P$ and changing the labels to their opposite. Thus, formally $P^{\cal I}={\cal I}(d_{n-1})\dots{\cal I}(d_2){\cal I}(d_{1})$, where ${\cal I}(U)=D$, ${\cal I}(D)=U$, ${\cal I}(R)=L$, and ${\cal I}(L)=R$. This embedding of $P^{\cal I}$ on $S$ is denoted by ${\cal E}^{\cal I}$. For example, if $P=UUDRL$, then $P^{\cal I}=RLUDD$. Observe also that $({P}^{\cal I})^{\cal I}=P$.
\begin{observation}
\label{obs:reverse}
Let $\cal E$ be a PDCE of a path $P$ on a point set $S$. Then ${\cal E}^{\cal I}$ is a PDCE of $P^{\cal I}$ on the same point set $S$.
\end{observation}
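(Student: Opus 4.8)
The plan is to prove Observation~\ref{obs:reverse} directly from the definitions, since the statement is essentially an unfolding of what $P^{\cal I}$ and ${\cal E}^{\cal I}$ mean. The key insight is that ${\cal E}^{\cal I}$ is literally the same injective map $V \to S$ as ${\cal E}$: reversing a path does not change which vertex goes to which point, it only relabels and reorients the edges. Thus the underlying straight-line drawing in the plane is identical. I would begin by recording this explicitly, so that the geometric picture (the set of segments drawn) is recognized as unchanged.

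From this observation the two required properties follow separately. For \emph{planarity}: since the drawn point locations and the drawn segments are exactly those of ${\cal E}$, and two segments cross (or coincide, or share an interior point) as a purely geometric fact independent of edge orientation or labeling, the drawing of $P^{\cal I}$ under ${\cal E}^{\cal I}$ is planar if and only if the drawing of $P$ under ${\cal E}$ is. As ${\cal E}$ is assumed to be a PDCE, it is planar, hence so is ${\cal E}^{\cal I}$. For \emph{direction-consistency}: I would argue edge by edge. An edge of $P$ with label $d_i$ corresponds, after reversal, to an edge of $P^{\cal I}$ with label ${\cal I}(d_i)$ whose endpoints are swapped (the source and sink of the original edge exchange roles). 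A short case check over the four labels confirms that swapping the endpoints of a segment turns a segment pointing up into one pointing down, left into right, and vice versa; this is exactly the action of ${\cal I}$. Hence each edge of $P^{\cal I}$ points in the direction dictated by its (new) label, so ${\cal E}^{\cal I}$ is direction-consistent.

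Combining the two parts, ${\cal E}^{\cal I}$ is a planar and direction-consistent embedding of $P^{\cal I}$ on $S$, i.e.\ a PDCE, which is the claim. I do not anticipate a genuine obstacle here: the statement is a symmetry lemma whose content is that the ``reversal'' operation is well-behaved, and the only care needed is the bookkeeping that the label involution ${\cal I}$ correctly matches the geometric effect of reversing each edge's orientation. The one point worth stating cleanly is why planarity is preserved, namely that it is a property of the undirected, unlabeled drawing and therefore invariant under both relabeling and reorientation.
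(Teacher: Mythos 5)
Your proof is correct and matches the paper's approach: the paper states this as an unproved observation, justified only by the remark in the definitions that $\cal E$ itself ``describes'' a direction-consistent embedding of $P^{\cal I}$ on $S$, which is precisely the unfolding you carry out (same map, same drawing, planarity is label-independent, and the involution $\cal I$ mirrors the geometric effect of swapping endpoints). Your write-up simply makes explicit what the paper leaves implicit.
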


Let $P$, $S$, and $\cal E$ be as above. The embedding $\cal E$ yields a straight-line drawing $\Gamma$ of $P$.  Consider the rotation of $\Gamma$ counterclockwise by $\pi/2$. This rotated drawing represents a direction-consistent embedding, denoted by ${\cal R}({\cal E})$, of a new path, denoted by ${\cal R}(P)$, on the rotated point set, denoted by ${\cal R}(S)$. This new path ${\cal R}(P)$ is formally defined as follows:  ${\cal R}(P)={\cal R}(d_1){\cal R}(d_2)\dots {\cal R}(d_{n-1})$, where ${\cal R}(U)=L$, ${\cal R}(D)=R$, \mbox{${\cal R}(R)=U$}, and ${\cal R}(L)=D$.  We use the notation ${\cal R}^k$ for $k$ applications of $\cal R$. Thus, ${\cal R}^4(P)=P$ and ${\cal R}^4(S)=S$. Also, if $P$ is an $\{U,D,L\}$-path and $S$ is a right-sided point set then ${\cal R}^2(P)$ is an $\{U,D,R\}$-path and ${\cal R}^2(S)$ is a left-sided point set.
Note that $P^{\cal I}\neq {\cal R}^2(P)$.
\begin{observation}
\label{obs:rotation}
Let $\cal E$ be a PDCE of a path $P$ on a point set $S$. Then ${\cal R}({\cal E})$ is a PDCE of ${\cal R}(P)$ on the point set ${\cal R}(S)$.
\end{observation}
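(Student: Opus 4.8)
The plan is to treat the counterclockwise rotation by $\pi/2$ as a rigid motion of the plane and to observe that its only effect on labels is precisely the permutation encoded by the map $\cal R$. Write $\rho$ for the linear bijection $(x,y)\mapsto(-y,x)$, so that ${\cal R}(S)=\rho(S)$ and ${\cal R}({\cal E})=\rho\circ{\cal E}$ as a placement of the vertices. Since $\rho$ is a bijection of the plane that sends straight-line segments to straight-line segments, it preserves incidences and crossings; hence it carries the drawing $\Gamma$ induced by $\cal E$ to the drawing of ${\cal R}(P)$ induced by ${\cal R}({\cal E})$, and the latter is a straight-line drawing that is planar precisely because $\Gamma$ is. Because $\cal E$ is injective and $\rho$ is a bijection, $\rho\circ{\cal E}$ is injective and maps $V(P)$ into $\rho(S)={\cal R}(S)$, so ${\cal R}({\cal E})$ is indeed an embedding of ${\cal R}(P)$ on ${\cal R}(S)$. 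I would also record that ${\cal R}(S)$ is again in general position: $\rho$ interchanges the roles of the $x$- and $y$-coordinates (up to sign), so distinct $x$-coordinates in $S$ become distinct $y$-coordinates in ${\cal R}(S)$ and vice versa, while collinearity is preserved by any linear map.

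It then remains to verify direction-consistency, and here the relevant point is that a label constrains only one coordinate of an edge: an edge labelled $U$ (resp.\ $D$) is one whose direction vector $(a,b)$ has $b>0$ (resp.\ $b<0$), while a label $R$ (resp.\ $L$) demands $a>0$ (resp.\ $a<0$). Applying $\rho$ turns the direction vector $(a,b)$ into $(-b,a)$. Hence an edge with $b>0$ becomes one whose first coordinate $-b$ is negative, i.e.\ a leftward edge, matching ${\cal R}(U)=L$; an edge with $b<0$ becomes rightward, matching ${\cal R}(D)=R$; an edge with $a>0$ becomes upward, matching ${\cal R}(R)=U$; and an edge with $a<0$ becomes downward, matching ${\cal R}(L)=D$. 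In each case the rotated edge points in the direction demanded by the label that ${\cal R}(P)$ assigns to it, so ${\cal R}({\cal E})$ is direction-consistent. Combining this with the planarity and embedding properties established above yields that ${\cal R}({\cal E})$ is a PDCE of ${\cal R}(P)$ on ${\cal R}(S)$.

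I do not expect any genuine obstacle here: the statement is essentially a bookkeeping lemma confirming that the rotation operation introduced in the definitions behaves as intended, and its content reduces to the elementary fact that rigid motions preserve planarity and straight-line\-ness, together with the four-case label check above. The one place that warrants a moment of care is the preservation of general position, since this relies on the rotation angle being exactly $\pi/2$, for which vertical and horizontal lines are interchanged rather than merely tilted; for this specific angle the argument goes through cleanly, and no genericity is lost.
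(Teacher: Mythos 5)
Your proof is correct. The paper states this observation without proof, treating it as immediate from the definition of the rotation operation, and your argument is precisely the verification that is implicitly intended: the rotation $(x,y)\mapsto(-y,x)$ is a bijection preserving straight-line segments, crossings, and general position, and the four-case check on direction vectors matches the label permutation ${\cal R}(U)=L$, ${\cal R}(D)=R$, ${\cal R}(R)=U$, ${\cal R}(L)=D$ exactly.
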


Finally, we define the operation of mirroring. Let $P$, $S$, $\cal E$, and $\Gamma$ be as before. Consider a vertical mirroring of $\Gamma$ through a vertical line not separating the points of $S$. This mirrored drawing represents a direction-consistent embedding, denoted by ${\cal M}({\cal E})$, of a new path, denoted by ${\cal M}(P)$, on the mirrored point set, denoted by ${\cal M}(S)$. This new path ${\cal M}(P)$ is formally defined as follows:  ${\cal M}(P)={\cal M}(d_1){\cal M}(d_2)\dots {\cal M}(d_{n-1})$, where ${\cal M}(U)=U$, ${\cal M}(D)=D$, ${\cal M}(R)=L$, and ${\cal M}(L)=R$.
\begin{observation}
\label{obs:mirroring}
Let $\cal E$ be a PDCE of a path $P$ on a point set $S$. Then ${\cal M}({\cal E})$ is a PDCE of ${\cal M}(P)$ on the point set ${\cal M}(S)$.
\end{observation}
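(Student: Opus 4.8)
The plan is to observe that vertical mirroring is nothing more than a reflection of the plane, and that each of the three properties we must verify---that $\mathcal{M}(\mathcal{E})$ is a valid embedding, that it is planar, and that it is direction-consistent---is either invariant under such a reflection or transforms in exactly the way the definition of $\mathcal{M}$ prescribes. Concretely, I would fix the mirror line as $x=c$ and write the reflection as the map $\mu\colon (x,y)\mapsto(2c-x,\,y)$, which is an isometry and in particular a homeomorphism of the plane onto itself. By construction $\mathcal{M}(\mathcal{E})=\mu\circ\mathcal{E}$ and $\mathcal{M}(S)=\mu(S)$, so the whole statement reduces to checking how $\mu$ acts on each property.

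First I would check that $\mathcal{M}(\mathcal{E})$ is a valid embedding of $\mathcal{M}(P)$ on $\mathcal{M}(S)$. Since $\mu$ is a bijection and $\mathcal{E}$ is injective, the composition $\mu\circ\mathcal{E}$ is injective, and its image lies in $\mu(S)=\mathcal{M}(S)$; moreover $\mathcal{M}(P)$ has the same underlying undirected path as $P$, so the vertex-to-point correspondence is well defined. Planarity then follows immediately from the fact that $\mu$ is a homeomorphism sending straight segments to straight segments: two edges of $\Gamma$ cross, or share an endpoint, if and only if their images do, so the drawing $\mu(\Gamma)$ induced by $\mathcal{M}(\mathcal{E})$ is planar whenever $\Gamma$ is.

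The only part that calls for a short case analysis is direction-consistency. Because $\mu$ preserves the $y$-coordinate and negates the translated $x$-coordinate, an edge whose source lies below its sink (i.e.\ an upward edge) is mapped to an edge that still points up, and likewise a downward edge stays downward; an edge whose source lies to the left of its sink is mapped to one whose source now lies to the right of its sink, i.e.\ a right edge becomes a left edge, and symmetrically a left edge becomes a right edge. This is exactly the relabelling $\mathcal{M}(U)=U$, $\mathcal{M}(D)=D$, $\mathcal{M}(R)=L$, $\mathcal{M}(L)=R$, so every edge of $\mathcal{M}(P)$ points in the direction demanded by its label. Combining the three parts yields that $\mathcal{M}(\mathcal{E})$ is a PDCE of $\mathcal{M}(P)$ on $\mathcal{M}(S)$.

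I expect no genuine obstacle, as the statement is a structural symmetry of the problem rather than a substantive result. The one place to be careful is to match the coordinate effect of the reflection with the formal definition of the label map $\mathcal{M}$---in particular to confirm that it is the horizontal labels $R,L$ that are swapped and the vertical labels $U,D$ that are fixed, which is forced precisely by mirroring across a vertical line rather than a horizontal one. I would also remark in passing that the choice of mirror line is immaterial: any two vertical lines produce drawings that differ by a horizontal translation, and translation alters none of the three properties.
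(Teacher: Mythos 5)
Your proof is correct and matches the paper's intent: the paper states this as an unproved observation, treating it as immediate from the definition of $\mathcal{M}$ as a vertical reflection of the drawing, and your argument simply formalizes that reasoning (reflection is an injectivity- and crossing-preserving isometry that fixes $y$-order and reverses $x$-order, forcing exactly the label map $\mathcal{M}(U)=U$, $\mathcal{M}(D)=D$, $\mathcal{M}(R)=L$, $\mathcal{M}(L)=R$). No gaps; nothing further is needed.
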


\section{Preliminaries}
\label{sec:preliminaries}

In this work we prove that every $n$-vertex three-directional path $P$ admits a PDCE on any set of $n$ points in convex position. As an overview, we sketch the basic idea of the proof.
First, we show that it is possible to construct a PDCE of an $\{U,D,R\}$-path on a one-sided point set, while controlling the position of one of its end-points (Lemma~\ref{lemma:left-sided-UDR} and Lemma~\ref{lemma:right-sided-UDR}).  Then we show that we can embed a two-directional $\{U,R\}$-path on a strip-convex point set $S$ while controlling the positions of both end vertices of the path (Lemma~\ref{lemma:upright}). We use these results to show that an $\{U,D,R\}$-path admits an embedding on any convex point set (Lemma~\ref{lemma:UDR}). For this, we separate a given convex point set into one-sided point sets and a strip-convex point set and go through a case distinction on the labels of the edges which correspond to the separation of the point set. Finally, we show that an embedding of any three-directional path can be reduced to the embedding of an $\{U,D,R\}$-path (Theorem~\ref{theorem:main}). We discuss the direction-consistency of constructed embeddings in detail in the flow of the proofs. However, the planarity of the embedding always follows
from a single simple principle that is described by the following lemma and which is based on Lemma 3 of Binucci~et al.~\cite{BinucciGDEFKL10}.

\begin{lemma}
	\label{lemma:planar}
	 An embedding of an $n$-vertex path on a convex point set is planar if and only if for each $i,~1<i<n$, path $P_{1,i}$ is mapped to a consecutive subset of $S$.
\end{lemma}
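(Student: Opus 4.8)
The plan is to prove both directions of the equivalence, using the fact that the points lie in convex position so that ``consecutive'' refers to a contiguous arc of the convex hull.

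\medskip

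\noindent\textbf{The forward direction (planar $\Rightarrow$ each prefix is consecutive).}
I would argue by contraposition. Suppose that for some $i$ with $1<i<n$ the prefix path $P_{1,i}$ is \emph{not} mapped to a consecutive subset of $S$. Then the image $\mathcal{E}(V(P_{1,i}))$ is split into at least two arcs along the convex hull, so there exist used points of $P_{1,i}$ separated on the hull by a point $q=\mathcal{E}(v_j)$ with $j>i$ (a vertex that appears \emph{later} in the path). The key geometric observation is that because $S$ is convex, the convex hull of $\mathcal{E}(V(P_{1,i}))$ together with the position of $q$ forces a crossing: the subpath $P_{1,i}$ is a connected curve whose endpoints and intermediate vertices lie on two separated hull arcs, and the edge of $P$ that later connects into $q$ (or the edges incident to $q$) must cross one of the edges of $P_{1,i}$. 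More precisely, I would isolate the ``first return'': let $v_a$ with $a\le i$ and $v_b$ with $b>i$ be consecutive-on-the-path vertices lying on opposite arcs determined by the gap that $q$ sits in; then the chord connecting them and the chords of $P_{1,i}$ spanning the gap must cross, contradicting planarity. This is essentially the argument of Binucci et al.~\cite[Lemma 3]{BinucciGDEFKL10}, adapted to convex position.

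\medskip

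\noindent\textbf{The reverse direction (each prefix consecutive $\Rightarrow$ planar).}
Here I would use induction on the length of the path, building the drawing vertex by vertex. Assume that for every $i$ the prefix $P_{1,i}$ occupies a consecutive arc $A_i$ of the hull, and note that consecutiveness is nested and incremental: $A_{i+1}$ is obtained from $A_i$ by adjoining the single point $\mathcal{E}(v_{i+1})$, which (since $A_{i+1}$ is also consecutive) must be an endpoint of the arc $A_i$ extended by one hull position on one of its two ends. The inductive claim is that the partial drawing of $P_{1,i}$ is planar \emph{and} lies entirely within the convex hull of the arc $A_i$. When we add the edge $(v_i,v_{i+1})$, its new endpoint $\mathcal{E}(v_{i+1})$ lies just outside $A_i$ on the hull, so the new edge is a chord that touches the region $\mathrm{conv}(A_i)$ only at the hull vertex $\mathcal{E}(v_i)$; since all previous edges live inside $\mathrm{conv}(A_i)$ and the new edge only meets that region at a hull vertex, no crossing with earlier edges can occur. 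Thus planarity is preserved at each step.

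\medskip

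\noindent\textbf{Main obstacle.}
The routine calculations are easy; the crux is the forward direction, specifically pinning down exactly \emph{which} two edges cross when a prefix fails to be consecutive. The difficulty is that a non-consecutive prefix can be broken into many arcs, and a later vertex $q$ sitting in a gap does not by itself create a crossing — one must trace how the path \emph{reconnects} across the gap. The clean way to handle this, which I would adopt, is to pick a minimal witness: take the smallest $i$ for which $P_{1,i}$ first becomes non-consecutive, so that $P_{1,i-1}$ is consecutive but adding $\mathcal{E}(v_i)$ creates a gap containing some already-placed point that must have been an endpoint of a still-pending edge. This minimality localizes the crossing to a bounded configuration and makes the convexity argument a short, direct case analysis rather than a global combinatorial argument.
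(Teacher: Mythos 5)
Your proof is correct, but the ``consecutive $\Rightarrow$ planar'' direction takes a genuinely different route from the paper's. For the forward direction you and the paper are on the same footing: the paper simply invokes Lemma~3 of Binucci et al.~\cite{BinucciGDEFKL10}, and your contraposition sketch is a reconstruction of that argument (note that your ``consecutive-on-the-path vertices $v_a$, $v_b$ with $a\le i<b$'' forces $(a,b)=(i,i+1)$, which need not straddle the gap as you describe; the clean formulation is that in a planar drawing the prefix path and the suffix path are vertex-disjoint, hence disjoint, connected curves, and two disjoint curves in a disk cannot join alternating boundary points---but since both you and the paper ultimately defer to the citation, this looseness is immaterial). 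The substantive difference is the reverse direction. The paper argues by contradiction: it takes the smallest $j$ for which the edge $(v_j,v_{j+1})$ is crossed, say by $(v_k,v_{k+1})$ with $k>j$, and uses this minimality to restrict where $v_1$ can lie relative to the two crossing chords, concluding that a prefix fails to be consecutive. You instead run a direct induction: the prefixes occupy nested hull arcs $A_1\subset A_2\subset\cdots$, each obtained from the previous one by appending a single point at an end of the arc, so the new edge $(v_i,v_{i+1})$ meets $\mathrm{conv}(A_i)$ only at $\mathcal{E}(v_i)$ and hence cannot cross any earlier edge. This buys you a constructive, self-contained argument with an explicit invariant, and it sidesteps the bookkeeping that makes the paper's version delicate (indeed, the paper's claim that $\mathcal{E}(P_{1,j})$ is non-consecutive is, strictly speaking, off by one: with $v_1,\dots,v_{j-1}$ confined as minimality forces, $P_{1,j}$ can still be consecutive, and it is $P_{1,j+1}$ that cannot be; the paper's argument is easily repaired this way, but yours needs no repair). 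The one step you should state explicitly is that $\mathcal{E}(v_i)$ sits at an \emph{end} of the arc $A_i$---this follows from your own nesting observation applied at step $i$, since $A_i$ is $A_{i-1}$ plus one point adjoined at an end---because without it the claim that the new edge touches $\mathrm{conv}(A_i)$ only at $\mathcal{E}(v_i)$ would be false, and with it the claim holds whether $\mathcal{E}(v_{i+1})$ extends $A_i$ at the same end as $\mathcal{E}(v_i)$ or at the opposite end.
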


%\begin{figure}[h!]
%	\centering
%	\includegraphics{figures/lemmaPlanarcol.pdf}
%	\caption{Illustration for the proof of Lemma~\ref{lemma:planar}.}
%	\label{fig:lemmaPlanar}
%\end{figure} \todo{remove this figure}	

% moved by *TH*
\begin{proof}
Let $\mathcal{E}$ be an embedding of $P$ on $S$. Lemma~3 in~\cite{BinucciGDEFKL10} states that if $\mathcal{E}$ is planar then for any~$i,~1<i<n$, both $P_{1,i-2}$ and $P_{i+1,n-1}$ are mapped to consecutive subsets of $S$.

For the reversed direction, assume for the sake of contradiction that $\mathcal{E}$ is not planar. This means that there exists a smallest $j$ such that $(v_j, v_{j+1})$ is crossed by another edge $(v_k, v_{k+1})$, for $k > j$. Vertex $v_1$ lies on $S$ either between $v_k$ and $v_j$ or between $v_j$ and $v_{k+1}$, since $j$ is the smallest index such that $(v_j, v_{j+1})$ is crossed. In both cases, $\mathcal{E}(P_{1,j})$ is not a consecutive subset of $S$, which is a contradiction. %\qed
\end{proof}

We next show that Algorithm~{\sc Backward Embedding} is able to accomplish two tasks: to construct a PDCE of an $\{U,D,R\}$-path on a left-sided point set, and to construct a PDCE of an $\{U,R\}$-path on a strip-convex point set. The algorithm traverses the path backwards and places the vertex $v_i$, $1<i\leq n$, so that, wherever vertex $v_{i-1}$ is placed, edge $(v_{i-1},v_i)$ is guaranteed to be direction-consistent. The algorithm is a generalization of the algorithm constructing a PDCE of an $\{U,D\}$-path~\cite{BinucciGDEFKL10}.

\begin{algorithm}[tpb]
\DontPrintSemicolon % Some LaTeX compilers require you to use \dontprintsemicolon instead
%\dontprintsemicolon % Some LaTeX compilers require you to use \dontprintsemicolon instead
\KwIn{$\{U,D,L,R\}$-path $P = d_1, \ldots, d_{n-1}$, convex point set $S$ of size $n$}
\KwOut{Function $\mathcal{E} : V(P) \to S$}
\For{$i \gets n-1$ \textbf{downto} $1$} {
  \Switch{$d_i$} {
  		\lCase{$U$:}{$\mathcal{E}(v_{i+1}) \gets t(S)$}
  		\lCase{$D$:}{$\mathcal{E}(v_{i+1}) \gets b(S)$}
  		\lCase{$L$:}{$\mathcal{E}(v_{i+1}) \gets \ell(S)$}
  		\lCase{$R$:}{$\mathcal{E}(v_{i+1}) \gets r(S)$}
  }
  $S \gets S \backslash \{ \mathcal{E}(v_{i+1})\}$\;
}
$\mathcal{E}(v_1) \gets v \in S$ \CommentSty{          //$S$ contains only one element}\;
\Return{$\mathcal{E}$}\;
\caption{{\sc Backward Embedding}}
\label{algo}
\end{algorithm}

\begin{lemma}
\label{lemma:left-sided-UDR} %was lemma observe
Let $S$ be a left-sided point set and let $P = d_1,\ldots,d_{n-1}$ be an $\{U,D,R\}$-path. Algorithm {\sc Backward Embedding} computes a PDCE $\cal E$ of $P$ on $S$ such that $\mathcal{E}(v_n)$ is $t(S)$, $b(S)$, or $r(S)\in \{t(S),b(S)  \}$, dependent on whether $d_{n-1}$ is $U$, $D$, or~$R$, respectively.
\end{lemma}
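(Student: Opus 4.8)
The plan is to analyze the run of Algorithm~{\sc Backward Embedding} and verify the two requirements—direction-consistency and planarity—separately, the latter through Lemma~\ref{lemma:planar}. I will write $S^{(i)}$ for the point set still available at the start of the iteration that processes $d_i$; thus $S^{(n-1)}=S$, the set $S^{(i)}$ is exactly the image $\mathcal{E}(\{v_1,\dots,v_{i+1}\})$ of the prefix $P_{1,i}$, and $S^{(i-1)}=S^{(i)}\setminus\{\mathcal{E}(v_{i+1})\}$.

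Direction-consistency is the easy half and is immediate from the backward order. When $d_i=U$ the algorithm puts $v_{i+1}$ on $t(S^{(i)})$, and since $v_i$ is assigned only in a later iteration it lands on some point of $S^{(i-1)}\subset S^{(i)}$, which lies strictly below $t(S^{(i)})$; hence $(v_i,v_{i+1})$ points up no matter where $v_i$ ends up. The cases $d_i=D$ (bottommost point) and $d_i=R$ (rightmost point) are symmetric, so every edge is correctly oriented by construction.

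The core of the argument is planarity. By Lemma~\ref{lemma:planar} it suffices to show that each $S^{(i)}=\mathcal{E}(P_{1,i})$ is a \emph{consecutive} subset of $S$, which I would obtain by a reverse induction on $i$ carrying the stronger invariant that $S^{(i)}$ is a consecutive arc whose topmost and bottommost points are precisely its two endpoints. Two geometric facts drive the induction. First, since $S$ is left-sided, its clockwise hull order is $t(S),b(S)$ followed by the left chain in strictly increasing $y$-coordinate; from this I would deduce that every consecutive arc of $S$ is itself left-sided, and in particular that its rightmost point lies in $\{t,b\}$. This justifies $r(S^{(i)})\in\{t(S^{(i)}),b(S^{(i)})\}$ (needed both for the $R$-case and for the statement about $\mathcal{E}(v_n)$) and shows that $r(S^{(i)})$ coincides with an endpoint under the invariant. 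Second, granting the invariant, each of $t(S^{(i)})$, $b(S^{(i)})$, and $r(S^{(i)})\in\{t,b\}$ is an endpoint of the arc, so deleting the one chosen by $d_i$ leaves a shorter consecutive arc; a short case check (top deleted, bottom deleted) then re-establishes that the top and bottom of the new arc are again its endpoints.

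I expect the main obstacle to be exactly the preservation of this invariant, that is, ruling out the configuration in which $S^{(i)}$ contains both $t$ and $b$ with the two of them adjacent but \emph{interior} to the arc: there, deleting the top would split the arc and destroy consecutiveness. The resolution is to treat the base step carefully: starting from the full set $S^{(n-1)}=S$, deleting $t(S)$, $b(S)$, or $r(S)\in\{t(S),b(S)\}$ always yields an arc of the good ``endpoints'' type, after which the induction never revisits the bad configuration. Finally, the claim about $\mathcal{E}(v_n)$ is read off directly from the first iteration ($i=n-1$), where $v_n$ is mapped to $t(S)$, $b(S)$, or $r(S)\in\{t(S),b(S)\}$ according to whether $d_{n-1}$ is $U$, $D$, or $R$.
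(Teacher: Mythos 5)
Your proof is correct and follows essentially the same route as the paper's: direction-consistency from the backward placement rule, and planarity via Lemma~\ref{lemma:planar} by showing that the points picked by the algorithm (equivalently, the free points) always form a consecutive subset of $S$. The only difference is one of detail: the paper asserts this consecutiveness in a single sentence, whereas you establish it by an explicit reverse induction with the strengthened ``topmost/bottommost are the arc endpoints'' invariant (handling the wrap-around configuration at the base step), which is a careful filling-in of the same argument.
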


\begin{proof}
Observe that the algorithm traverses the path backwards and decides the placement of vertex $v_{i+1}$ based on the label of the edge $(v_{i},v_{i+1})$, i.e., $d_i$. If $d_i=U$ (resp. $D,~L,~R$), vertex $v_{i+1}$ is placed on the topmost (resp. bottommost, leftmost, rightmost) of the currently free points. Hence, when vertex $v_{i}$ is placed at the next step on any other free point, edge $(v_{i},v_{i+1})$ is guaranteed to be direction-consistent.

For the planarity, observe that the procedure picking the rightmost, the topmost, and the bottommost points of a left-sided point set, creates a consecutive subset of $S$. Thus, for any $i,~1\leq i \leq n-1$, path $P_{i,n-1}$ (and therefore also $P_{1,i-1}$) is mapped to a consecutive subset of $S$. Hence, by Lemma~\ref{lemma:planar}, the created embedding is also planar. %\qed
\end{proof}

The following lemma can be proven based on Lemma~\ref{lemma:left-sided-UDR} and the operations of rotation of a point set and reverse of a path.
\begin{lemma}
\label{lemma:right-sided-UDR} %was lemma observe
An $\{U,D,R\}$-path $P = d_1,\ldots,d_{n-1}$ admits a PDCE on any right-sided point set $S$ such that $\mathcal{E}(v_1)$ is $b(S),~t(S)$, or $\ell(S)\in \{t(S),b(S) \}$, dependent on whether $d_1$ is $U,~D$, or $R$, respectively.
\end{lemma}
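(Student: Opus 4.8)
The plan is to reduce the claim to Lemma~\ref{lemma:left-sided-UDR} by composing the reverse and rotation operations, so that no new embedding argument is needed beyond careful bookkeeping of labels and extreme points. The guiding idea is that Lemma~\ref{lemma:left-sided-UDR} controls the \emph{last} vertex $v_n$ on a \emph{left}-sided set, whereas here we must control the \emph{first} vertex $v_1$ on a \emph{right}-sided set; the reverse operation exchanges the roles of $v_1$ and $v_n$, while a rotation by $\pi$ (two applications of ${\cal R}$) exchanges left-sided and right-sided sets.

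First I would pass to the reverse path $P^{\cal I}$. Since $P$ is an $\{U,D,R\}$-path, $P^{\cal I}$ is an $\{U,D,L\}$-path on the same set $S$, and its last vertex corresponds to $v_1$ of $P$. Next I would apply ${\cal R}^2$: as noted in Section~\ref{sec:definitions}, this turns the $\{U,D,L\}$-path $P^{\cal I}$ on the right-sided set $S$ into an $\{U,D,R\}$-path $Q := {\cal R}^2(P^{\cal I})$ on the left-sided set $S'' := {\cal R}^2(S)$. Lemma~\ref{lemma:left-sided-UDR} now applies to $Q$ on $S''$ and yields a PDCE $\cal F$ whose last vertex lands on $t(S'')$, $b(S'')$, or $r(S'')\in\{t(S''),b(S'')\}$ according to the last label of $Q$.

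The crux is then the two-fold bookkeeping. On the label side, one checks that ${\cal R}^2\circ{\cal I}$ acts as the identity on $\{U,D,R\}$ (for instance $U\mapsto D\mapsto U$), so the last label of $Q$ equals $d_1$; hence $\cal F$ places the controlled vertex at $t(S'')$, $b(S'')$, or $r(S'')$ exactly when $d_1$ is $U$, $D$, or $R$. On the point side, ${\cal R}^2$ is a rotation by $\pi$, so $t(S'')={\cal R}^2(b(S))$, $b(S'')={\cal R}^2(t(S))$, and $r(S'')={\cal R}^2(\ell(S))$. Pulling these back therefore gives $b(S)$, $t(S)$, and $\ell(S)$ respectively, and since $r(S'')\in\{t(S''),b(S'')\}$ we obtain $\ell(S)\in\{t(S),b(S)\}$ in the last case --- precisely the placements claimed for $\mathcal{E}(v_1)$.

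Finally I would assemble the embedding by reversing the two reductions: applying Observation~\ref{obs:rotation} twice transfers $\cal F$ into a PDCE of ${\cal R}^2(Q)=P^{\cal I}$ on ${\cal R}^2(S'')=S$, and then Observation~\ref{obs:reverse} converts this into a PDCE of $(P^{\cal I})^{\cal I}=P$ on $S$ with the asserted placement of $v_1$. The only real difficulty is keeping the composition of the three label maps and the transformation of the four extreme points consistent through the chain; once ${\cal R}^2\circ{\cal I}=\mathrm{id}$ on $\{U,D,R\}$ is verified, everything else is routine.
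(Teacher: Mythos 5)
Your proof is correct and essentially identical to the paper's: both reduce to Lemma~\ref{lemma:left-sided-UDR} by combining ${\cal R}^2$ with reversal, apply it to the resulting $\{U,D,R\}$-path on the left-sided set ${\cal R}^2(S)$, and pull the embedding back via Observations~\ref{obs:rotation} and~\ref{obs:reverse} with the same bookkeeping of labels and extreme points. The only difference is that you compose reverse-then-rotate while the paper rotates then reverses; since these operations commute, the path handed to Lemma~\ref{lemma:left-sided-UDR} is the same in both cases.
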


% moved by *TH*
\begin{proof}
Observe that the point set ${\cal R}^2(S)$, i.e., $S$ rotated by $\pi$, is a left-sided point set. Observe also that ${\cal R}^2(P)$ is an $\{U,D,L\}$-path. The reverse of ${\cal R}^2(P)$, i.e., ${\cal R}^2(P)^{\cal I}$, is again an $\{U,D,R\}$-path.
Let $\cal E$ be a PDCE of ${\cal R}^2(P)^{\cal I}$ on ${\cal R}^2(S)$, which exists by Lemma~\ref{lemma:left-sided-UDR}, such that the last vertex of ${\cal R}^2(P)^{\cal I}$ is mapped to $t({\cal R}^2(S))$, $b({\cal R}^2(S))$, or $r({\cal R}^2(S))$ if the last edge of ${\cal R}^2(P)^{\cal I}$ has label $U$, $D$, or $R$, respectively. By Observation~\ref{obs:reverse}, $\cal E ^{\cal I}$ is a PDCE of  ${\cal R}^2(P)$ on ${\cal R}^2(S)$ and finally, by Observation~\ref{obs:rotation}, ${\cal R}^2({\cal E} ^{\cal I})$ is a PDCE of $P$ on $S$. Moreover, observe that the first vertex of $P$ is the last vertex of ${\cal R}^2(P)^{\cal I}$ and that the first edge of $P$ and the last edge of ${\cal R}^2(P)^{\cal I}$ have the same label.
Observe also that the topmost (resp. bottommost, leftmost) point of $S$ is the bottommost (resp. topmost, rightmost) point of ${\cal R}^2(S)$. Hence, we infer that ${\cal R}^2({\cal E} ^{\cal I})(v_1)=b(S)$ if $d_1=U$, ${\cal R}^2({\cal E} ^{\cal I})(v_1)=t(S)$ if $d_1=D$, and ${\cal R}^2({\cal E} ^{\cal I})(v_1)=\ell(S)$ if $d_1=R$. %\qed
\end{proof}

\begin{lemma}
\label{lemma:upright}
Let $S$ be a strip-convex point set and let $P = d_1,\ldots,d_{n-1}$ be an $\{U,R\}$-path. Algorithm~{\sc Backward Embedding} computes a PDCE $\mathcal{E}$ of $P$ on $S$ such that (i) $\mathcal{E}(v_1)$ is $b(S)$ or $l(S)$, and (ii) $\mathcal{E}(v_n)$ is $t(S)$ or $r(S)$, dependent on whether $d_{n-1}$ is $U$ or $R$, respectively.
\end{lemma}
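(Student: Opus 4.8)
The plan is to verify the three requirements of the statement separately, reusing as much as possible the reasoning already developed for Lemma~\ref{lemma:left-sided-UDR}. \emph{Direction-consistency} is immediate and identical to the argument there: whenever the algorithm processes $d_i$ it places $v_{i+1}$ on the topmost (if $d_i=U$) or rightmost (if $d_i=R$) currently free point, so no matter which free point later receives $v_i$, the edge $(v_i,v_{i+1})$ points up, respectively right. The position $\mathcal{E}(v_n)$ in part~(ii) also follows at once: in the first iteration ($i=n-1$) the whole set $S$ is still free, so $v_n$ is mapped to $t(S)$ if $d_{n-1}=U$ and to $r(S)$ if $d_{n-1}=R$. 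Hence the only real work lies in (a) planarity and (b) the claim $\mathcal{E}(v_1)\in\{b(S),\ell(S)\}$ of part~(i).

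For planarity I would appeal to Lemma~\ref{lemma:planar} and maintain, throughout the execution, the invariant that the set of currently free points is a \emph{consecutive} subset (an arc) of~$S$; since the free set just before $v_{i+1}$ is placed is exactly $\mathcal{E}(V(P_{1,i}))$, this yields planarity. The free arc keeps its bottom-left corner intact because the algorithm only ever deletes a topmost or rightmost point while $b(S)$ and $\ell(S)$ are consecutive by strip-convexity. The heart of the argument is the following stronger claim, which I would prove by induction on the number of deletions: if $S'$ is the current free arc, then both $t(S')$ and $r(S')$ are \emph{endpoints} of $S'$. Written in clockwise order, the arc runs from its upper-left endpoint $p$ down through $\ell(S)$ and $b(S)$ up to its lower-right endpoint $q$; along this traversal the $y$-coordinate is unimodal (decreasing to $b(S)$, then increasing) and the $x$-coordinate is unimodal (decreasing to $\ell(S)$, then increasing), so both coordinate maxima are attained at $p$ or $q$, giving $t(S'),r(S')\in\{p,q\}$. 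Deleting either endpoint leaves a shorter arc, preserving the invariant. I expect this invariant to be the main obstacle, as it is precisely where strip-convexity is needed: in a general convex set the topmost or rightmost free point could lie in the interior of the free arc, splitting it and destroying consecutiveness.

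Finally, for part~(i) I would track how the two arc-endpoints move under deletions. Each step removes $p$ or $q$; the upper-left endpoint marches monotonically down the upper-left chain toward $\ell(S)$, and the lower-right endpoint marches down the lower-right chain toward $b(S)$. Thus the free upper-left points always form a prefix of that chain ending at $\ell(S)$, so $\ell(S)$ is the last of them to be deleted, and symmetrically $b(S)$ is the last lower-right point to be deleted. Consequently no interior chain point can outlive both $\ell(S)$ and $b(S)$, so the single point remaining at the end — the one receiving $v_1$ — must be $\ell(S)$ or $b(S)$. Combining the three parts gives the claimed PDCE with the prescribed endpoints; the degenerate cases $t(S)=r(S)$ or $b(S)=\ell(S)$ are covered by the ``or coincide'' clause of strip-convexity and need no separate treatment.
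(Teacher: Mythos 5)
Your proposal is correct and takes essentially the same route as the paper's proof: direction-consistency exactly as for Lemma~\ref{lemma:left-sided-UDR}, planarity via Lemma~\ref{lemma:planar} by showing that the set of free points (equivalently, used points) always stays consecutive on $S$ thanks to strip-convexity, the position of $v_n$ from the first iteration, and the position of $v_1$ from the greedy top/right behavior of the algorithm. Your arc-endpoint invariant and chain-prefix argument are simply more detailed elaborations of what the paper states tersely; the paper settles part~(i) by observing that once $\ell(S)$ (resp.\ $b(S)$) is used, $b(S)$ (resp.\ $\ell(S)$) becomes simultaneously the leftmost and bottommost free point and therefore survives until the very last step.
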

\begin{proof}
Direction consistency of the embedding can be seen similarly to the first part of the proof of Lemma~\ref{lemma:left-sided-UDR}. For the planarity recall that since $S$ is a strip-convex point set, its rightmost and topmost points are either consecutive or coincide. Since $P$ is an $\{U,R\}$-path, Algorithm~{\sc Backward Embedding} picks at every step either the rightmost  or the topmost point of the remaining free points. Thus, the set of used points is a consecutive subset of $S$, and, by Lemma~\ref{lemma:planar}, the embedding is planar.
The position of $v_n$ follows trivially. For the position of $v_1$ we observe the following.  If the algorithm picks $b(S)$ (resp. $\ell(S)$) when searching for the topmost (resp. rightmost) free point then all other points of $S$ have already been used and therefore all the remaining vertices of $P$ except for $v_1$ have already been placed. Hence, $v_1$ is then placed on $b(S)$ (resp. $\ell(S)$). Otherwise, if the algorithm picks
$\ell(S)$ (resp. $b(S)$) when searching for the topmost (resp. rightmost) point of $S$, then the point which is clockwise after $\ell(S)$ (before $b(S)$) has already been used, since it is higher than $\ell(S)$ (resp. to the right of $b(S)$). Therefore, after $\ell(S)$ (resp. $b(S)$) has been used, $b(S)$ (resp. $\ell(S)$) becomes the leftmost (resp. bottommost) free point. Being simultaneously the leftmost and the bottommost free point, $b(S)$ (resp. $\ell(S)$) will be used as the last point by the algorithm.       %\qed
\end{proof}

\section{Three-directional paths}
\label{sec:three-dir}
The following lemma is the key ingredient for the proof of a main result of this paper. We postpone its proof until we have seen how the lemma is used.

\begin{lemma}
\label{lemma:UDR}
Let $S$ be a convex point set with the property that $t(S)$ is to the right of $b(S)$. Any $\{U,D,R\}$-path admits a PDCE on $S$.
\end{lemma}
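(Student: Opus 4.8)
The plan is to embed an $\{U,D,R\}$-path on a convex point set $S$ where $t(S)$ lies to the right of $b(S)$ by decomposing $S$ into simpler pieces for which the preliminary lemmas already provide embeddings. Under the assumption that $t(S)$ is to the right of $b(S)$, the convex hull of $S$ splits naturally along the diagonal from $b(S)$ to $t(S)$: the points clockwise from $b(S)$ to $t(S)$ form a right-sided subset (handled by Lemma~\ref{lemma:right-sided-UDR}), the points clockwise from $t(S)$ to $b(S)$ form a left-sided subset (handled by Lemma~\ref{lemma:left-sided-UDR}), and the region around $\ell(S)$ and $r(S)$ can be packaged as a strip-convex subset for the two-directional $\{U,R\}$-embedding of Lemma~\ref{lemma:upright}. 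The overall strategy is to pick a single edge $d_k$ of $P$ as the ``separating edge'' whose two endpoints $v_k,v_{k+1}$ straddle the boundary between two of these pieces, embed the prefix $P_{1,k-1}$ on one piece and the suffix $P_{k+1,n-1}$ on the other, and verify that the separating edge $(v_k,v_{k+1})$ is itself direction-consistent.

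The key steps, in order, are as follows. First I would fix the split of $S$ determined by $b(S)$ and $t(S)$, identifying the one-sided and strip-convex parts and recording which extreme points ($t$, $b$, $\ell$, $r$) are available as controllable endpoints in each preliminary lemma. Second, I would scan the labels $d_1,\dots,d_{n-1}$ to locate a suitable separating index $k$, distinguishing cases according to the label $d_k\in\{U,D,R\}$; the label of the separating edge dictates the vertical/horizontal relation its endpoints must satisfy, and this must be matched against which extreme points the two sub-embeddings place their boundary vertices on. Third, for each case I would invoke the appropriate preliminary lemma on each side, using its control over the endpoint placement (e.g.\ Lemma~\ref{lemma:right-sided-UDR} controls $\mathcal{E}(v_1)$ and Lemma~\ref{lemma:left-sided-UDR} controls $\mathcal{E}(v_n)$) to guarantee that the two boundary vertices land on points realizing the direction of $d_k$. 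Finally, planarity of the glued embedding follows from Lemma~\ref{lemma:planar}: since each piece occupies a consecutive arc of the convex hull and the pieces are arranged consecutively around $S$, every prefix $P_{1,i}$ maps to a consecutive subset.

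The main obstacle I anticipate is the case analysis on the separating label together with the interaction between monotone runs of $P$ and the geometry of the split. The difficulty is twofold: first, one must argue that a separating edge with the \emph{right} label always exists, or otherwise handle degenerate situations where, say, $P$ contains no $R$-edge (so it is an $\{U,D\}$-path entirely embeddable on one one-sided piece) or where all edges force the path to run along a single side; second, one must ensure the controlled endpoints on the two pieces are compatible with $d_k$, which may require subdividing the strip-convex portion or choosing \emph{which} of the two candidate extreme points each lemma delivers. I expect the bulk of the work to lie in enumerating these configurations — likely organized by whether $d_k$ is $U$, $D$, or $R$, and by where the monotone subpath containing $v_k$ begins and ends — and in checking that in every configuration the three lemmas can be stitched together so that both direction-consistency across the seam and the consecutiveness hypothesis of Lemma~\ref{lemma:planar} hold simultaneously.
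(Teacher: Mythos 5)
Your overall architecture matches the paper's: split $S$ along the line through $b(S)$ and $t(S)$, embed the pieces via Lemmas~\ref{lemma:left-sided-UDR}, \ref{lemma:right-sided-UDR}, and~\ref{lemma:upright}, and deduce planarity of the glued embedding from Lemma~\ref{lemma:planar}. But there is a genuine gap, and it begins with your organizing device. You cannot ``scan the labels to locate a suitable separating index $k$'': cardinality forces where the path crosses the separating line. If $S_\ell$ is the set of points strictly to the left of the line through $b(S)$ and $t(S)$ and $m=|S_\ell|$, then any embedding that places a prefix of $P$ on the left piece (together with $b(S)$ or $t(S)$) must split the path exactly at position $m$; the labels $d_m$ and $d_{m+1}$ at this forced position are whatever they happen to be, and it is they that drive the case analysis, not a label you get to choose. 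The paper's easy cases --- $d_m=D$, $d_{m+1}\in\{U,R\}$, and the symmetric one --- are the only configurations your two-piece gluing scheme covers: there the two sub-embeddings share the vertex $v_{m+1}$, mapped by both to $b(S)$ (resp.\ $t(S)$).

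The substance of the proof is exactly what you defer as an ``anticipated obstacle'': what to do when the forced labels are unfavorable. When $d_m=d_{m+1}=D$, the paper takes the maximal $D$-run $P_{a,b}$ spanning the boundary, re-carves the sides as the $a$ highest points of $S_\ell\cup\{t(S)\}$ and the $n-b$ lowest points of $S_r\cup\{b(S)\}$, and embeds the run on the middle block sorted by decreasing $y$-coordinate. When $d_m,d_{m+1}\in\{U,R\}$, it takes the maximal $\{U,R\}$-run $P_{i,j}$ spanning the boundary and must compare $i$ and $j$ against $\alpha$ and $\beta$, the numbers of points left of $b(S)$ resp.\ left of $t(S)$: only when $i>\alpha$ and $j<\beta$ is the middle block strip-convex, so that Lemma~\ref{lemma:upright} applies (your sketch implicitly assumes this case); otherwise (the paper's Cases~4B--4D) one needs further maximal $U$-runs containing $d_\alpha$ or $d_\beta$, embedded on vertically sorted blocks hugging the left or right side, with separate treatment when $d_\alpha$ or $d_\beta$ equals $R$, and the decomposition grows to as many as five pieces --- so a single seam edge between two pieces cannot suffice. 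Without this analysis the proposal is a plan, not a proof. Two smaller geometric slips point the same way: the arc clockwise from $b(S)$ to $t(S)$ is the \emph{left-sided} piece (it contains $\ell(S)$), not the right-sided one, and the strip-convex middle block consists of the points vertically between $b(S)$ and $t(S)$ --- it excludes, rather than contains, the neighborhoods of $\ell(S)$ and $r(S)$.
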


\begin{theorem}
\label{theorem:main}
Any three-directional path admits a PDCE on a convex point set.
\end{theorem}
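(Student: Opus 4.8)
The plan is to reduce the embedding of an arbitrary three-directional path to the embedding of an $\{U,D,R\}$-path, so that Lemma~\ref{lemma:UDR} can be applied. A three-directional path uses at most three of the four labels $\{U,D,L,R\}$, so there are four possible types of three-directional paths depending on which label is omitted: $\{U,D,R\}$-paths, $\{U,D,L\}$-paths, $\{U,L,R\}$-paths, and $\{D,L,R\}$-paths. I would exploit the symmetry operations $\cal R$ (rotation by $\pi/2$) and $\cal M$ (vertical mirroring) together with Observations~\ref{obs:rotation} and~\ref{obs:mirroring}, which guarantee that a PDCE of a transformed path on a transformed point set yields a PDCE of the original path on the original point set. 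Since these operations act as a group on the label set $\{U,D,L,R\}$, every three-directional type can be mapped to the $\{U,D,R\}$ type by an appropriate composition of $\cal R$ and $\cal M$.

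First I would fix the target form to be an $\{U,D,R\}$-path, and observe how the transformations permute the omitted label. For a $\{U,D,L\}$-path, a single vertical mirror $\cal M$ sends $L\mapsto R$ (and fixes $U,D$), turning it into an $\{U,D,R\}$-path. For an $\{U,L,R\}$-path, a rotation $\cal R$ by $\pi/2$ maps $U\mapsto L$, $R\mapsto U$, $L\mapsto D$, producing an $\{U,D,L\}$-path, to which the mirror argument then applies; equivalently one finds a single transformation in the group that sends $\{U,L,R\}$ to $\{U,D,R\}$. Similarly, a $\{D,L,R\}$-path is handled by an analogous composition. In each case, I would track what the corresponding transformation does to the point set: $\cal R$ and $\cal M$ send convex point sets to convex point sets, and I would check that the resulting point set satisfies the hypothesis of Lemma~\ref{lemma:UDR}, namely that $t$ is to the right of $b$.

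The main obstacle I anticipate is precisely this geometric bookkeeping on the point set side. Lemma~\ref{lemma:UDR} is not symmetric in the two orientations: it requires $t(S)$ to lie to the right of $b(S)$. An arbitrary convex point set may instead have $t(S)$ to the left of $b(S)$, and each symmetry operation changes the relative horizontal positions of the topmost and bottommost points. Hence I cannot simply apply Lemma~\ref{lemma:UDR} after transforming the path; I must verify that for whichever transformation reduces the path to an $\{U,D,R\}$-path, the transformed point set has its top point to the right of its bottom point, or else choose an alternative transformation (or additionally apply $\cal M$) that both achieves the label reduction and respects the orientation hypothesis. The clean way to organize this is to note that for any convex point set $S$, at least one of $S$ and ${\cal M}(S)$ satisfies ``$t$ right of $b$'', and $\cal M$ fixes $U$ and $D$ while swapping $L$ and $R$; so I would first apply $\cal M$ if necessary to normalize the point set, then argue the label reduction is still available.

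Once both reductions are in place, the theorem follows: given any three-directional path $P$ on a convex point set $S$, apply the appropriate composition of $\cal R$ and $\cal M$ to obtain an $\{U,D,R\}$-path $P'$ on a convex point set $S'$ with $t(S')$ right of $b(S')$; by Lemma~\ref{lemma:UDR} there is a PDCE of $P'$ on $S'$; then by Observations~\ref{obs:rotation} and~\ref{obs:mirroring} the inverse transformation carries this back to a PDCE of $P$ on $S$. I would present the four cases in a short table indicating the transformation used and the verification of the orientation hypothesis, keeping the argument a routine case check once the symmetry group action on labels and the behavior of $t,b$ under the transformations are spelled out.
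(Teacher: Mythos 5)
Your overall strategy is exactly the paper's: all four label-set cases are reduced to Lemma~\ref{lemma:UDR} by symmetry operations, with the pull-back justified by Observations~\ref{obs:rotation} and~\ref{obs:mirroring}, and the paper's proof of Theorem~\ref{theorem:main} is precisely such a four-case reduction. The one place where your write-up would fail if followed literally is the mechanism you propose for meeting the orientation hypothesis of Lemma~\ref{lemma:UDR}. Your ``clean way'' --- first apply $\cal M$ to normalize $S$ so that $t$ is right of $b$, then perform the label reduction --- does not work: the label reductions for $\{U,L,R\}$- and $\{D,L,R\}$-paths require rotations, and a rotation scrambles the relative horizontal position of the topmost and bottommost points (after ${\cal R}$, that position is governed by whether $r(S)$ lies above or below $\ell(S)$, independently of any prior normalization). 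Likewise, the parenthetical ``additionally apply $\cal M$'' at the end cannot be the fix, because $\cal M$ maps an $\{U,D,R\}$-path to an $\{U,D,L\}$-path, destroying the label reduction you just achieved.

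The correct repair, which stays inside your framework, is to observe that the operation needed is one that \emph{stabilizes} the label set $\{U,D,R\}$ while exchanging the roles of $t$ and $b$: within the group generated by $\cal R$ and $\cal M$ this is the horizontal mirror ${\cal R}^2\circ{\cal M}$, which sends $U\mapsto D$, $D\mapsto U$ and fixes $L,R$. For each of the four label sets there are exactly two group elements carrying it into $\{U,D,R\}$, and they differ by this horizontal mirror; since the horizontal mirror flips which of $t,b$ is to the right, exactly one of the two choices also satisfies the hypothesis of Lemma~\ref{lemma:UDR}. This makes your deferred ``routine case check'' genuinely routine. It is worth noting that the paper achieves the same effect by a different composition: in its Case~1 it applies path reversal (Observation~\ref{obs:reverse}) followed by $\cal M$, and on labels this composition acts exactly as the horizontal mirror ($U\leftrightarrow D$, $L,R$ fixed), so the two proofs coincide up to whether one uses reversal or ${\cal R}^2$ as the second generator alongside $\cal M$.
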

\begin{proof}
Consider the four different possibilities of a 3-directional path $P$.

\noindent{\bf Case~1:} $P$ is an $\{U,D,R\}$-path.
 Since $S$ is in general position, $t(S)$ is either to the right or to the left of $b(S)$. In the former case a PDCE of $P$ on $S$ exists by Lemma~\ref{lemma:UDR}. For the latter case, observe that in ${\cal M} (S)$, point $t({\cal M} (S))$ is to the right of $b({\cal M} (S))$. Moreover, $P ^{\cal I}$ is an $\{U,D,L\}$-path, and ${\cal M} (P ^{\cal I})$ is again an  $\{U,D,R\}$-path. By Lemma~\ref{lemma:UDR}, there exists a PDCE $\cal E$ of ${\cal M} (P ^{\cal I})$ on ${\cal M} (S)$. By Observation~\ref{obs:mirroring}, ${\cal M}({\cal E})$ is a PDCE of $P ^{\cal I}$ on $S$. Due to Observation~\ref{obs:reverse}, ${\cal M}({\cal E})^{\cal I}$ is a PDCE of $P$ on $S$.

\vspace{0.3ex}\noindent{\bf Case~2:} $P$ is an $\{U,D,L\}$-path.
Observe that $P^{\cal I}$ is an $\{U,D,R\}$-path. Let $\cal E$ be a PDCE of $P^{\cal I}$ on $S$, which exists by Case~1. Then ${\cal E} ^ {\cal I}$ is a PDCE of $P$ on $S$.

\vspace{0.3ex}\noindent{\bf Case~3:} $P$ is an $\{U,L,R\}$-path.
Thus, ${\cal R}(P)$ is an $\{U,D,L\}$-path. Due to Case~2, there exists a PDCE $\cal E$ of ${\cal R}(P)$ on ${\cal R}(S)$. By Observation~\ref{obs:rotation}, ${\cal R}({\cal E})$ is a PDCE of $P$ on $S$.

\vspace{0.3ex}\noindent{\bf Case~4:} $P$ is a $\{D,L,R\}$-path.
Notice that ${\cal R}(P)$ is an $\{U,D,R\}$-path. Thus, for a PDCE $\cal E$ of ${\cal R}(P)$ on ${\cal R}(S)$, which exists due to Case~1, ${\cal R}({\cal E})$ is a PDCE of $P$ on $S$. This concludes the proof of the theorem. %\qed
\end{proof}

\begin{proof}[Proof of Lemma~\ref{lemma:UDR}]
%\noindent\emph{Proof of Lemma~\ref{lemma:UDR}. }
Let $S_\ell$ denote the subset of $S$ containing all points on the left of the line through $b(S)$ and $t(S)$, and let $m=|S_\ell|$.
We distinguish several cases based on the labels $d_m$ and $d_{m+1}$.

\noindent{\bf Case 1:} $d_{m}=D$,  $d_{m+1} \in \{U,R\}$  (see Fig.~\ref{fig:case1} for an illustration).
We embed $P_{1,m}$ on $S_l \cup \{b(S)\}$ using Algorithm~{\sc Backward Embedding}. By Lemma~\ref{lemma:left-sided-UDR}, vertex $v_{m+1}$ is mapped to $b(S)$.
Then, we embed  $P_{m+1,n-1}$ on $S_r \cup \{t(S),b(S)\}$ in the way given by Lemma~\ref{lemma:right-sided-UDR}. Since  $\ell(S_r \cup \{t(S),b(S)\}) = b(S_r \cup \{t(S),b(S)\}) = b(S)$ and $d_{m+1} \in \{U,R\}$, vertex $v_{m+1}$ is mapped to $b(S)$. Thus, the union of these embeddings is a PDCE of $P$ on $S$.

\begin{figure}[tpb]
\centering
\subfigure[]{\label{fig:case1}\includegraphics[scale=0.8]{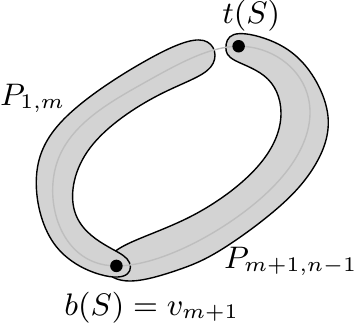}}
\hfill
\subfigure[]{\label{fig:case2}\includegraphics[scale=0.8]{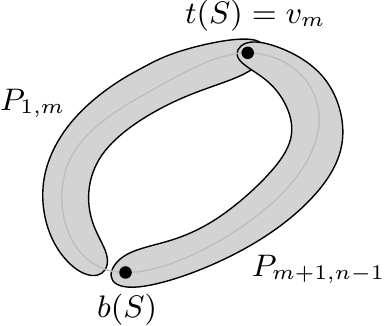}}
\hfill
\subfigure[]{\label{fig:case3}\includegraphics[scale=0.8]{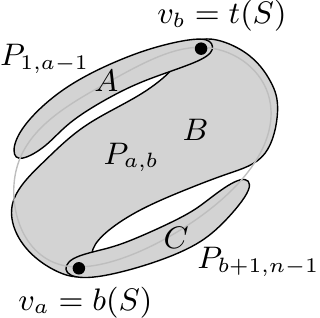}}
\caption{Illustration of the construction in Cases~1-3.}
\label{fig:case1_3}
\end{figure}

\vspace{0.3ex}\noindent{\bf Case 2:} $d_{m}  \in \{U,R\}$, $d_{m+1}=D$ (see Fig.~\ref{fig:case2}).
We embed $P_{1,m}$ on $S_l \cup \{t(S)\}$ using Algorithm~{\sc Backward Embedding}. By Lemma~\ref{lemma:left-sided-UDR}, vertex $v_{m+1}$ is mapped to $t(S)$ since $r(S_l \cup \{t(S)\}) = t(S_l \cup \{t(S)\}) = t(S)$ and $d_m \in \{U,R\}$.
Due to Lemma~\ref{lemma:right-sided-UDR}, we can embed $P_{m+1,n-1}$ on $S_r \cup \{t(S),b(S)\}$ such that vertex $v_{m+1}$ is mapped to $t(S)$,
since $t(S_r \cup \{t(S),b(S)\}) = t(S)$ and $d_{m+1} = D$.
Thus, the union of these embeddings is a PDCE of $P$ on $S$.

\vspace{0.3ex}\noindent{\bf Case 3:} $d_{m}=D$, $d_{m+1}=D$ (see Fig.~\ref{fig:case3}).
Let $P_{a,b}$, $1 \leq a \leq m < m+1 \leq b \leq n-1$, be the maximal subpath of $P$ containing $d_m$, $d_{m+1}$ and only $D$ labels.
Let $A$ be the $a$ highest points of $S_l \cup \{t(S)\}$. Observe that $A$ exists since $a \leq m$.
We embed $P_{1,a-1}$ on $A$ using Algorithm~{\sc Backward Embedding}.
By Lemma~\ref{lemma:left-sided-UDR}, vertex $v_{a}$ is mapped to $t(S)$, since $d_{a-1} \in \{U,R\}$ and $r(A)= t(A) = t(S)$.
Let $C$ be the $n-b$ lowest points of $S_r \cup \{b(S)\}$. Since $|S_r \cup \{b(S)\}| = n-m-1$, and $b\geq m+1$, thus $n-b\leq n-m-1$, and therefore $C$ exists. 
By Lemma~\ref{lemma:right-sided-UDR},  we can embed $P_{b+1,n-1}$ on $C$ such that $v_{b+1}$ is mapped to $b(S)$ since $\ell(C) = b(C) = b(S)$ and $d_{b+1} \in \{U,R\}$.
Let $B$ be $(S \backslash (A \cup C)) \cup \{t(S),b(S)\}$. We embed the $D$-path $P_{a,b}$ on $B$, starting with $v_a$ at $t(S)$ and ending with $v_{b+1}$ at $b(S)$, by sorting the points of $B$ by decreasing $y$-coordinate. Merging the PDCEs for $P_{1,a-1}, P_{a,b}$, and $P_{b+1,n-1}$, we obtain a PDCE of $P$ on $S$.

\vspace{0.3ex}\noindent{\bf Case 4:} $d_m,d_{m+1} \in \{U,R\}$.
Let $P_{i,j}$ where $1 \leq i \leq m < m+1 \leq j \leq n-1$ be the maximal subpath of $P$ containing $d_m, d_{m+1}$ and only $U\!/\!R$-labels. Thus $d_{i-1}\!=\!d_{j+1}\!=\!D$, if they exist.
Let $\alpha$ (resp. $\beta$) denote the number of points of $S$ lying to the left of $b(S)$ (resp. $t(S)$, including $t(S)$).
We consider several cases based on how the indices $i$, $j$ are related to the indices $\alpha$, $\beta$.
The intuition behind this is to distinguish whether or not the points that are vertically between $b(S)$ and $t(S)$ are enough to embed $P_{i,j}$.

\begin{figure}[tpb]
\centering
\subfigure[]{\label{fig:case4ab_path}\includegraphics[scale=0.8]{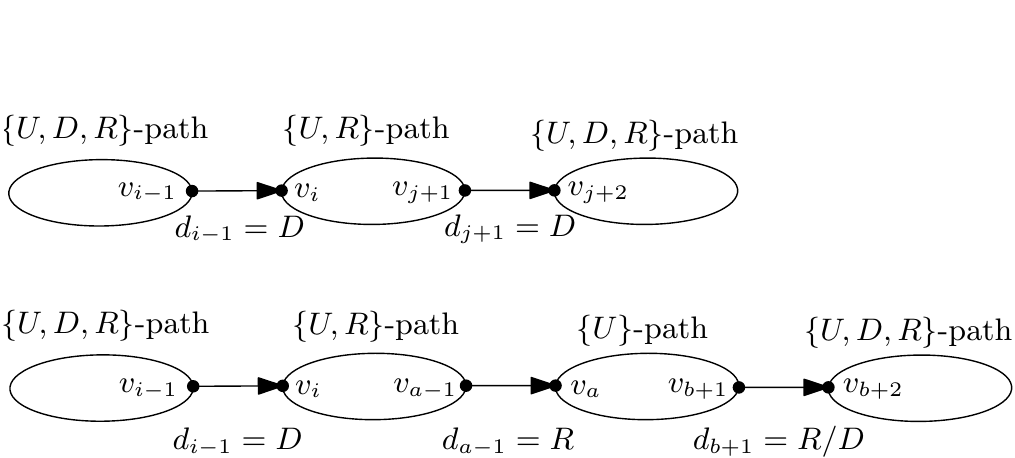}}
\hfill
\subfigure[]{\label{fig:case4a_embedding}\includegraphics[scale=0.8]{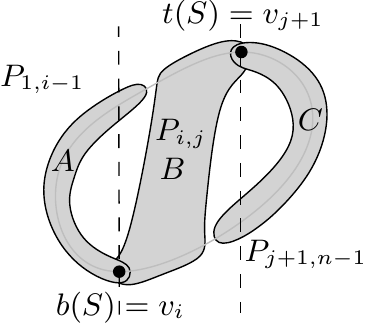}}
\caption{(a) Structure of the path in Cases 4A (above) and 4B
  (below). (b) Construction in Case~4A.}
\label{fig:case4a}
\end{figure}

\vspace{0.3ex}\noindent{\bf Case 4A:} $i > \alpha$ and $j < \beta$, i.e., the points vertically between $b(S)$ and $t(S)$ are enough to embed $P_{i,j}$ (see Fig.~\ref{fig:case4a}). \\
Let $A$ be the $i$ lowest points of $S_l \cup \{b(S)\}$; $A$ exists since $i\leq m$. By Lemma~\ref{lemma:left-sided-UDR}, we can embed $P_{1,i-1}$ on $A$ such that $v_i$ is mapped to $b(S)$. 
Let $C$ be the $n-j$ highest points of $S_r \cup \{t(S)\}$; $C$ exists since $n-j < n-m$. 
By Lemma~\ref{lemma:right-sided-UDR}, we can embed $P_{j+1,n-1}$ on $C$ such that $v_{j+1}$ is mapped to $t(S)$ since $d_{j+1}=D$. Let $B$ be $(S \backslash (A \cup C)) \cup \{b(S),t(S)\}$. Since $i > \alpha$, $\ell(B) = b(B) = b(S)$, and since $j < \beta$, $r(B) = t(B) = t(S)$. Thus, $B$ is a strip-convex point set.
By Lemma~\ref{lemma:upright}, we can embed the $\{U,R\}$-path $P_{i,j}$ on $B$ such that $v_i$ lies on $b(S)$ and $v_{j+1}$ lies on $t(S)$.
By merging the constructed embeddings of $P_{1,i-1},~P_{i,j}$, and $P_{j+1,n-1}$, we obtain a PDCE of $P$ on $S$.\\
Observe that if either $i-1=\alpha$ and $d_{\alpha}=R$ or $j+1=\beta$ and $d_{\beta}=R$ or both, then the embedding can be constructed identically. In case $d_{\alpha}=R$, vertex $v_i$ is mapped to $r(A)=b(S)$. In case $d_{\beta}=R$, vertex $v_{j+1}$ is mapped to $\ell(C)=t(S)$. Thus, these embeddings can be merged with the above embedding of $P_{i,j}$ on $B$.

\begin{figure}[tpb]
\centering
\subfigure[]{\label{fig:case4b_embedding}\includegraphics[scale=0.8]{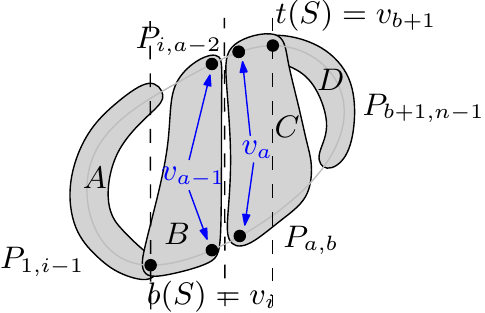}}
\hfill
\subfigure[]{\label{fig:case4c_embedding}\includegraphics[scale=0.8]{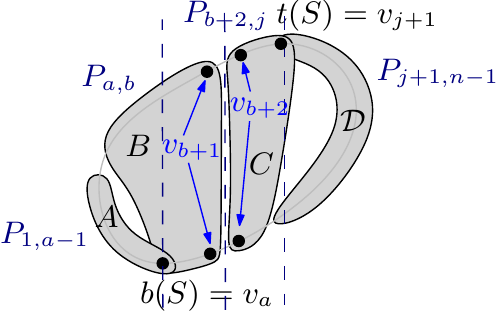}}
\hfill
\subfigure[]{\label{fig:case4d_embedding}\includegraphics[scale=0.8]{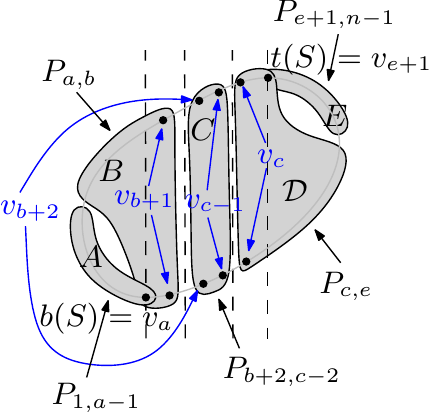}}
\caption{Constructions for (a) Case~4B, (b) Case~4C, and (c) Case~4D when $P_{b+2,c-2}$ is non-empty (if $c=b+2$ the set $C$ is empty; if $a=c$ and $b=e$ the sets $B$ and $\cal D$ are not distinguished).
}
\label{fig:case4bc}
\end{figure}

\vspace{0.3ex}\noindent{\bf Case 4B:} $i > \alpha$ and $j \geq \beta$.
In this case $d_{\beta} \in \{U,R\}$.
If $d_{\beta}=R$ then the embedding is constructed as explained at the end of Case~4A.
In the following we assume $d_{\beta}=U$. \\
Let $P_{a, b}, i \leq a \leq \beta \leq b \leq j$ be the maximal subpath of $P$ containing $d_{\beta}$ and only $U$-edges; see Fig.~\ref{fig:case4ab_path}(below) for the structure of the constructed path.
If $a>i$, $d_{a - 1} = R$. Otherwise, if $a=i$ then $d_{a-1}=D$, i.e., the $\{U,R\}$-path $P_{i,a-2}$ is empty.
Let $A$ be the $i$ lowest points of $S_l \cup \{b(S)\}$ (see Fig.~\ref{fig:case4b_embedding}). Notice that $A$ is a left-sided point set and $b(A) = b(S)$.
We can embed $P_{1,i-1}$ on $A$ by Lemma~\ref{lemma:left-sided-UDR} such that vertex $v_i$ is mapped to $b(S)$.
Let $\cal D$ be the $n - b$ highest points of $S_r \cup \{t(S)\}$. By Lemma~\ref{lemma:right-sided-UDR}, we can embed $P_{b+1,n-1}$ on $\cal D$ such that
vertex $v_{b+1}$ is mapped to $t(S)$.
Let $B$ be the $a - i$ leftmost points of $(S \backslash A) \cup \{b(S)\}$.
If $a=i$ then $B$ is empty. Otherwise, since $i>\alpha$, $\ell(B) = b(B) = b(S)$  and since $a\leq \beta$, the points $t(B)$ and $r(B)$ are consecutive in~$B$.  Thus, $B$ is a strip-convex point set and by Lemma~\ref{lemma:upright} we can embed the $\{U,R\}$-path $P_{i, a - 2}$ on $B$ such that vertex $v_i$ is mapped to $b(S)$ and vertex $v_{a-1}$ is mapped to either $t(B)$ or $r(B)$.
Let $C = S \backslash (A \cup B \cup {\cal D}) \cup \{t(S)\}$.
We embed $P_{a, b}$ on $C$ by sorting the points by increasing $y$-coordinate. Thus, vertex $v_a$ is mapped to $b(C)$ and vertex $v_{b+1}$ is mapped to $t(S)$. 
If $a=i$, vertex $v_i=v_a$ is already mapped to $b(S)$, thus at this step we only embed the vertices of the $\{U\}$-path $P_{a+1,b}$.  \\
Next we merge the constructed PDCEs of $P_{1, i-1}, P_{i, a - 2}, P_{a, b}$, and $P_{b + 1, n-1}$.
If $a=i$, the edge $d_i$ points upward since $v_i$ is mapped to $b(S)$.
Otherwise, since $v_{a - 1}$ is mapped to $t(B)$ or $r(B)$, $v_a$ is mapped to $b(C)$, $B$ and $C$ are separable by a vertical line, and edge $(v_{a-1},v_a)$ points to the right and does not cross the remaining drawing.\\
Recall that this case considers the situation where $i>\alpha$. In case $i\leq \alpha$, we know that $d_{\alpha} \in \{U,R\}$. If it happens that $d_{\alpha}=R$, the construction can be accomplished identically by considering index $\alpha+1$ everywhere in place of $i$. Here, Lemma~\ref{lemma:left-sided-UDR} guarantees a mapping of $P_{1,\alpha}$ with $v_{\alpha+1}$ on $b(S)$ since it is the rightmost point of $A$ and $d_{\alpha}=R$.

\vspace{0.3ex}\noindent{\bf Case 4C:} $i \leq \alpha$ and $j < \beta$. This case is symmetric to  Case~4B. If $d_\alpha=R$ the embedding is constructed as explained at the end of  Case~4A. Otherwise $d_{\alpha}=U$ and we again identify the maximal $\{U\}$-subpath $P_{a,b}$ of $P$ containing $d_\alpha$. The structure of the path in this case is shown in Fig.~\ref{fig:case4cd_path} and the embedding in Fig.~\ref{fig:case4c_embedding}. \\
Also, similar to Case~4B, we can use this construction to embed a path where $j \geq \beta$ and $d_{\beta}=R$. For that, consider the illustration of Fig.~\ref{fig:case4c_embedding}. We set $\cal D$ to contain only points to the right of $t(S)$ and $t(S)$, i.e., $|{\cal D}|=n-\beta+1$. We embed $P_{\beta,n-1}$ on $\cal D$. By  Lemma~\ref{lemma:right-sided-UDR}, we can map $v_{\beta}$ to $t(S)$, since $d_{\beta}=R$ and $t(S)$ is the leftmost point of $\cal D$. The remaining construction is identical.

\begin{figure}[htb]
		\centering
		\includegraphics[scale=0.8]{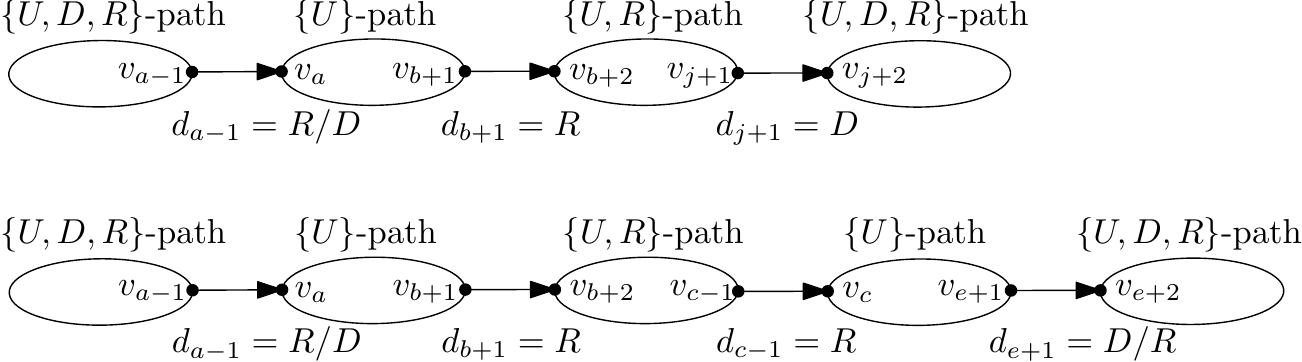}
		\caption{Structure of the path in Cases~4C (above) and 4D (below).}
		\label{fig:case4cd_path}		
\end{figure}
%\begin{figure}[tpb]
%\centering
%\subfigure[]{\label{fig:case4cd_path}\includegraphics[scale=0.75]{figures/case4cd_path.pdf}}
%\hfill
%\subfigure[]{\label{fig:counter}\includegraphics[scale=0.7]{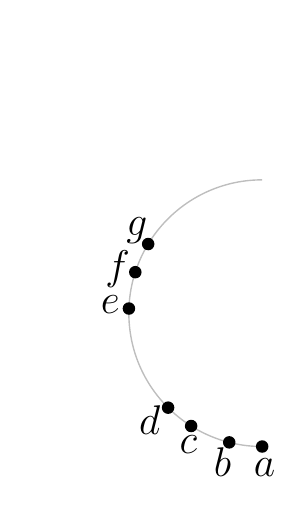}}
%\caption{(a) Structure of the path in Case~4C(above) and in Case~4D(below). (b) Path $P=LULRDR$ does not admit a PDCE on this point set.}
%\label{fig:case4cd}
%\end{figure}

%\vspace{0.3ex}
\noindent{\bf Case 4D:} $i \leq \alpha$ and $j \geq \beta$, $d_\alpha=d_\beta=U$.
Let $P_{a,b}$, $a\leq \alpha \leq b$, be the maximal \mbox{$\{U\}$-subpath} of $P$ containing $d_{\alpha}$. Similarly, let $P_{c,e}$, $c\leq \beta \leq e$, be the maximal \mbox{$\{U\}$-subpath} of $P$ containing $d_{\beta}$. If there is no $R$-edge between $d_\alpha$ and $d_\beta$ then $a=c$ and $b=e$. If there is a single $R$-edge between them then $c=b+2$. Otherwise, $P_{b+2,c-2}$ is a $\{U,R\}$-path containing at least one vertex; see Fig.~\ref{fig:case4cd_path} for this case. \\
We embed the $\{U,D,R\}$-path $P_{1,a-1}$ on the $a$ lowest points, denoted by $A$, of $S_\ell \cup \{b(S)\}$.
By Lemma~\ref{lemma:left-sided-UDR}, we can map $v_{a}$ to $b(S)$, since the rightmost point of $A$ is $b(S)$  and $d_{a-1} \in \{D,R\}$.
By Lemma~\ref{lemma:right-sided-UDR}, we can embed $P_{e+1,n-1}$ on the $n-e-1$ highest points, denoted by $E$, of $S_r \cup \{t(S)\}$, such that $v_{e+1}$ is mapped to $t(S)$, since it is the leftmost point of $E$ and $d_{e+1} \in \{D,R\}$. Fig.~\ref{fig:case4d_embedding} shows the case where $P_{b+2,c-2}$ is non-empty. However, it presents the idea of the embedding in the remaining cases as well. \\
If $a=c$ and $b=e$ then $P_{a,e}$ is a $\{U\}$-path. We embed it on $S\setminus (A\cup E) \cup \{b(S),t(S)\}$, by sorting the points by increasing $y$-coordinate. This completes the construction of a PDCE of $P$ on $S$. Otherwise, we let $B$ (resp. $\cal D$) be the $b-a+2$ leftmost (resp. $e-c+2$ rightmost) points of $S\setminus (A \cup E) \cup \{b(S),t(S)\}$. We embed $P_{a,b}$ (resp. $P_{c,e}$) on $B$ (resp. $\cal D$)  by sorting its points by $y$-coordinates. \\
If $c=b+2$, the $\{U\}$-paths $P_{a,b}$ and $P_{c,e}$ are joined by a single $R$-edge.
Since $v_{b+1}$ is to the left of $v_{b+2}=v_c$, the constructed embedding yields a direction-consistent embedding of the edge $(v_{b+1},v_{b+2})$ and this completes the construction of a PDCE of $P$ on $S$. Otherwise, $P_{b+2,c-2}$ is an $\{U,R\}$-path that contains at least one vertex and $d_{b-1}=d_{c-1}=R$. We embed $P_{b+2,c-2}$ on the remaining free points, i.e., on the point set $C=S\setminus (A \cup B \cup {\cal D} \cap E)$.
By construction of $B$ and $\cal D$, the set $C$ is separated from the remaining points by vertical lines. Thus, $\ell(C)$ and $b(C)$ are either consecutive or coincide. Similarly, points $t(C)$ and $r(C)$ are either consecutive or coincide. Thus, $C$ is a strip-convex point set. By Lemma~\ref{lemma:upright}, we can embed $P_{b+2,c-2}$ on $C$ such that $v_{b+2}$ is mapped to one of  $\ell(C)$ or $b(C)$, and $v_{c-1}$ to one of $t(C)$ or $r(C)$. As $v_{b+2}$ is mapped to the highest point of $B$ and $v_{c}$ is mapped to the lowest point of $D$, we infer that the obtained embedding of $P$ on $S$ is planar. Since $d_{b+1}=d_{c-1}=R$ and by the fact that $C$ is separated from $B$ and $\cal D$ by vertical lines, it is also direction-consistent.  This concludes the proof of the lemma. %\qed
\end{proof}

\section{Four-directional paths}
\label{sec:four-dir}

% The proof of the following theorem, which can be found in the Appendix, is based on the counterexample showing that the path $P=LULRDR$ does not admit a PDCE on the convex point set shown in Figure~\ref{fig:counter}.
The proof of the following theorem %, which can be found in the full version~\cite{ahlmv-efdpc-14X},
is based on the counterexample showing that the path $P=LULRDR$ does not admit a PDCE on the convex point set shown in Figure~\ref{fig:counter}.

\begin{figure}[htb]
		\centering
		\includegraphics[scale=0.8]{}\\[-1ex]
		\caption{Path $P=LULRDR$ does not admit a PDCE on this point set. }
		\label{fig:counter}		
\end{figure}

\begin{theorem}
\label{theorem:4_dir_counter}
	There exists a one-sided point set $S$ and an $\{U,D,L,R\}$-path $P$ such that there is no PDCE of $P$ on $S$.
\end{theorem}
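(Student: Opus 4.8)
The plan is to prove a negative result by exhibiting an explicit counterexample, so the whole proof reduces to constructing one bad path/point-set pair and verifying exhaustively that no PDCE exists. Following the hint preceding the statement, I would take $P = LULRDR$ (a $7$-vertex four-directional path) and the one-sided convex point set $S$ depicted in Figure~\ref{fig:counter}. The task is then purely a case analysis: show that every attempt to place the vertices of $P$ consistent with the edge labels either violates direction-consistency or planarity.

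The key tool is Lemma~\ref{lemma:planar}, which lets me reason about planarity combinatorially: a path embedding on a convex set is planar if and only if every prefix $P_{1,i}$ occupies a clockwise-consecutive arc of $S$. So my plan is to track, for each admissible placement, the consecutive arc swept out by the prefix and to propagate the direction constraints ($L,U,L,R,D,R$) along the path. First I would fix coordinates for $S$ so that ``one-sided'' is concrete and the relative up/down/left/right relations among the points are unambiguous; the one-sidedness should be chosen precisely so that the horizontal edges $L$ and $R$ are the bottleneck. Then I would argue about the two horizontal edges: the single $R$ after the $LUL$ prefix and the final $R$ after $D$. Because the set is one-sided, once the prefix is forced onto one extreme side, the required rightward and leftward moves collide with the convex-position/consecutivity requirement. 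Concretely, I would follow the forced placements: the initial $L$ forces the first step leftward, $U$ then upward, the second $L$ leftward again, which pins the prefix against the left boundary; the subsequent $R$ and $D$ and final $R$ then cannot all be realized while keeping every prefix consecutive, yielding a contradiction.

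The main obstacle is that, unlike the positive results, I cannot invoke a clean reduction (reverse, rotation, mirroring) to collapse the cases — the counterexample is inherently concrete, so the verification is a finite but genuinely case-heavy search over the possible images of $v_1$ (equivalently, over which consecutive arc the prefix occupies). The real work is to organize this search so it is short: I would use Observation~\ref{obs:reverse} and the mirroring/rotation symmetries to cut the number of essentially distinct starting configurations, and I would lean on Lemma~\ref{lemma:planar} to prune any branch the moment a prefix fails to be consecutive. The delicate point is ensuring the chosen point set is simultaneously one-sided and in general position while forcing the collision; I expect the verification to come down to a small number of surviving branches, each eliminated by pointing out a specific edge that cannot point in its required direction without crossing an earlier edge or breaking consecutivity. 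Once every branch is closed, the non-existence of a PDCE for $P$ on $S$ — and hence the theorem — follows.

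Because the statement is an existence claim about a counterexample, I would present the proof as: fix $P$ and $S$, state the coordinates, then walk through the forced sequence of placements, invoking Lemma~\ref{lemma:planar} at each branch to derive the contradiction. No asymptotics or general machinery are needed; the entire content is the correctness of this single refutation.
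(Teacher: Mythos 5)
Your proposal is correct and follows essentially the same route as the paper: the same counterexample $P = LULRDR$ on the one-sided point set of Fig.~\ref{fig:counter}, with Lemma~\ref{lemma:planar} reducing planarity to prefix-consecutivity and an exhaustive, figure-style case analysis closing every branch. The paper's own proof is exactly this (deferring the case enumeration to Fig.~\ref{fig:counterproof}), so no further comparison is needed.
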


% moved by *TH*
\begin{proof}
Consider the path $P=LULRDR$ and the left-sided point set $S$ of Fig.~\ref{fig:counter}. Lemma~\ref{lemma:planar} states that in order to obtain a planar embedding of $P$ on $S$, a subpath of $P$ must be mapped to consecutive points of $S$. Fig.~\ref{fig:counterproof} illustrates a complete case analysis based on this principle and shows that there is no PDCE of $P$ on $S$.
\begin{figure}[h!]
		\centering
		\includegraphics[scale=1]{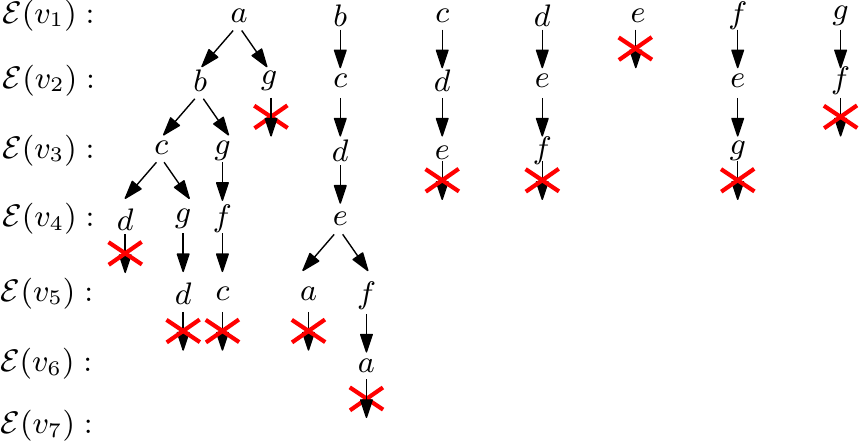}
		\caption{Illustration of the case analysis of the proof of Theorem~\ref{theorem:4_dir_counter}. }
		\label{fig:counterproof}		
\end{figure} %\qed
\end{proof}

A one-sided point set $S$ is a special case of a convex point set, such that  $b(S)$ and $t(S)$ are consecutive. However, as Theorem~\ref{theorem:4_dir_counter} states, such a point set does not always admit a PDCE of every four-directional path.
On the other hand, consider a one-sided convex point set $S$ where one of the following pairs represents a clockwise consecutive subset of $S$:
 $(i)$ $t(S)$ and $\ell(S)$,
 $(ii)$ $r(S)$ and $t(S)$,
 $(iii)$ $b(S)$ and $r(S)$,
 $(iv)$ $\ell(S)$ and $b(S)$.
Such a point set is called \emph{quarter-convex}. It can be easily seen that every quarter-convex point set admits a PDCE of any four-directional path. Actually, in case $(i)$ an edge pointing right always points up and an edge pointing left always points down. Thus, the problem of embedding a $\{U,D,R,L\}$-path
is reduced to embedding a $\{U,D\}$-path, which always admits a PDCE on any convex point set~\cite{BinucciGDEFKL10}. Similar reductions can be made for any other type of a quarter-convex point set. Therefore, we state the following:
\begin{observation}
Any $\{U,D,L,R\}$-path has a PDCE on any quarter-convex point set.
\end{observation}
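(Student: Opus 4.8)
The plan is to prove the statement by the two-step reduction that the surrounding text sketches, made precise. First I would use the symmetry operations to collapse the four defining conditions into one. The rotation $\mathcal R$ (counterclockwise by $\pi/2$) cyclically permutes the four distinguished points $t(S),r(S),b(S),\ell(S)$, and hence cyclically permutes the conditions $(i)$--$(iv)$ that define quarter-convexity; by Observation~\ref{obs:rotation} a PDCE of ${\cal R}(P)$ on ${\cal R}(S)$ yields a PDCE of $P$ on $S$. Thus it suffices to treat a single representative, say case $(i)$ in which $t(S)$ and $\ell(S)$ are clockwise consecutive, and to obtain the other three cases by applying $\mathcal R$ the appropriate number of times.

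The core of the argument is a geometric claim: a quarter-convex point set of type $(i)$ is a \emph{monotone increasing chain}, i.e.\ for any two points $p,q\in S$ we have that $p$ lies to the left of $q$ if and only if $p$ lies below $q$. Granting this, the labels $R$ and $U$ become interchangeable, and so do $L$ and $D$: a straight-line edge of any drawing on $S$ points right exactly when it points up, and points left exactly when it points down. Consequently, given a four-directional path $P$, I would form the $\{U,D\}$-path $P'$ obtained from $P$ by relabelling every $R$ as $U$ and every $L$ as $D$. Since $P'$ uses only two labels, it admits a PDCE $\cal E$ on $S$ (this is the $\{U,D\}$ base case of Binucci et al.~\cite{BinucciGDEFKL10}, and also a special case of Theorem~\ref{theorem:main}). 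The very same drawing $\cal E$ is direction-consistent for $P$: an edge of $P'$ labelled $U$ that came from an $R$-edge of $P$ points up in $\cal E$, hence by the monotone-chain property also points right, as required; the cases where the label is unchanged ($U$, $D$) and where $L$ becomes $D$ are handled identically. Planarity is inherited verbatim because $\cal E$ is literally the same drawing. Hence $P$ has a PDCE on $S$.

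I expect the geometric claim to be the main obstacle, i.e.\ verifying that the quarter-convex hypothesis genuinely forces the monotone-chain structure. Here I would argue from the two defining properties together: consecutiveness of $t(S)$ and $\ell(S)$ makes $\ell(S)\,t(S)$ a hull edge, so every other point lies in the wedge below and to the right of the line through $\ell(S)$ and $t(S)$, while one-sidedness fixes on which side of the line through $b(S)$ and $t(S)$ the remaining points sit. The delicate point is ruling out a ``down-right'' pair $p,q$ (with $q$ to the right of but below $p$), since exactly such a pair would break the equivalence between $R$ and $U$; this is where I expect the one-sidedness hypothesis to be essential, and I would be careful to pair it correctly with case $(i)$ so that convexity together with the extremality of $\ell(S)$ and $t(S)$ leaves no room for such a pair. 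Once this monotonicity is secured for case $(i)$ (equivalently, that sorting $S$ by $x$-coordinate also sorts it by $y$-coordinate), the remaining three cases follow immediately by rotation, completing the proof.
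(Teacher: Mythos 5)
Your high-level reduction is essentially the paper's: on a type-$(i)$ quarter-convex set, relabel every $R$ as $U$ and every $L$ as $D$, embed the resulting $\{U,D\}$-path using the result of Binucci et al.~\cite{BinucciGDEFKL10}, and reuse the identical drawing (the paper performs such a relabelling in each of the four types directly, while you treat type $(i)$ and obtain the others by rotation --- a legitimate variant). The genuine gap is exactly at the step you flag as the crux, and your planned argument for it cannot work. You read condition $(i)$ as ``$\ell(S)t(S)$ is a hull edge'' and hope that this, combined with one-sidedness and convexity, forces $x$-order to coincide with $y$-order. It does not, for either choice of side: take $S=\{(0,9),(1,5),(4,0),(5,10)\}$. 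This set is in convex and general position, it is left-sided (hence one-sided), and $\ell(S)=(0,9)$ is adjacent to $t(S)=(5,10)$ on the hull; yet $p=(0,9)$, $q=(1,5)$ is precisely a ``down-right'' pair. So under your reading this set is quarter-convex of type $(i)$ while the monotone-chain lemma fails for it; no amount of care with extremality and convexity will repair the statement as you have set it up.

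What is missing is that ``clockwise consecutive'' in this paper is an \emph{ordered} condition --- this is the only thing that distinguishes ``left-sided'' ($t(S)$ then $b(S)$) from ``right-sided'' ($b(S)$ then $t(S)$) in the definitions. Read this way, condition $(i)$ says that $\ell(S)$ immediately follows $t(S)$ in the clockwise traversal. Since a clockwise traversal of any convex set meets $t(S),r(S),b(S),\ell(S)$ in this cyclic order, both $r(S)$ and $b(S)$ lie on the closed clockwise arc from $t(S)$ to $\ell(S)$, so consecutiveness together with general position forces $r(S)=t(S)$ and $b(S)=\ell(S)$. (In the example above the clockwise-consecutive pair is $(\ell(S),t(S))$, not $(t(S),\ell(S))$, so it is not quarter-convex at all.) This collapse of the extreme points is what yields your monotone chain: all of $S$ is then a single hull chain from $b(S)=\ell(S)$ to $t(S)=r(S)$, and a down-right pair $p,q$ is impossible --- by convex position the four points $b(S),p,t(S),q$ would occur in this clockwise order, placing $q$ strictly between $t(S)$ and $\ell(S)$ and contradicting their consecutiveness (equivalently: the chain is simultaneously an upper chain between leftmost and rightmost and a left chain between bottommost and topmost, hence monotone in both coordinates). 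With the claim established this way, the remainder of your argument --- the relabelling, planarity inherited verbatim, and the rotation handling types $(ii)$--$(iv)$ --- is sound.
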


Based on Lemma~\ref{lemma:planar}, it is easy to derive a dynamic programming algorithm to decide whether a four-directional path admits a PDCE on a convex point set.
This is formalized in the following theorem.
A similar algorithm, described in~\cite{KaufmannMS13}, tests whether an upward planar digraph admits an upward planar embedding on a convex point set.

\begin{theorem}
\label{theorem:4-dir-decide} Let $P$ be an $n$-vertex four-directional path and $S$ be a convex point set. It can be decided in $O(n^2)$ time whether $P$ admits a PDCE on $S$.
\end{theorem}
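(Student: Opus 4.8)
The plan is to design a dynamic program over partial embeddings of prefixes of $P$, exploiting Lemma~\ref{lemma:planar}: an embedding is planar exactly when every prefix $P_{1,i}$ occupies a consecutive arc of $S$. First I fix a clockwise labeling $p_0,\dots,p_{n-1}$ of the points of $S$ along its convex hull. By Lemma~\ref{lemma:planar}, in any PDCE the images of $P_{1,1}\subset P_{1,2}\subset\cdots\subset P_{1,n}$ form a nested increasing chain of consecutive arcs, each obtained from the previous one by appending a single point at one of its two ends. Moreover, since $P_{1,k}$ and $P_{1,k-1}$ are both arcs differing by one point, the image of $v_k$ is always that newly appended point, which lies at an \emph{endpoint} of the arc $\mathcal{E}(P_{1,k})$. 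This observation is what keeps the state space small.

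Concretely, I would define a boolean table indexed by a consecutive arc $A=\{p_a,\dots,p_b\}$ (indices taken mod $n$) together with a flag $s\in\{a,b\}$ indicating at which endpoint of $A$ the last-placed vertex sits. The entry is true iff the prefix $P_{1,k}$, where $k=|A|$, admits a planar direction-consistent embedding onto $A$ with $\mathcal{E}(v_k)=p_s$. The base cases are the single-point arcs $A=\{p_a\}$, all feasible. The recurrence extends an arc by one point: the entry for $(A,s)$ is true iff $p_s$ is an endpoint of $A$, the smaller arc $A'=A\setminus\{p_s\}$ together with one of its two endpoints $s'$ is feasible, and the edge with label $d_{k-1}$ directed from $p_{s'}$ (its source) to $p_s$ (its sink) is direction-consistent. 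By the definition of pointing up/down/left/right and general position, this last test is a single coordinate comparison, hence $O(1)$. Finally, $P$ admits a PDCE on $S$ iff some full-length arc ($k=n$) is feasible.

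For the running time, there are $O(n^2)$ consecutive arcs (choose a start and a length) and a constant factor of two for the endpoint flag, so $O(n^2)$ states; each is evaluated from a constant number of predecessor entries with an $O(1)$ direction check, giving $O(n^2)$ total time. Correctness follows in both directions from Lemma~\ref{lemma:planar}: any PDCE induces a feasible chain of extensions, because planarity forces the nested-arc structure and direction-consistency is recorded edge by edge; conversely, any chain of feasible extensions yields an embedding whose every prefix is consecutive---hence planar by Lemma~\ref{lemma:planar}---and whose every edge is direction-consistent by construction.

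I expect the main obstacle to be bookkeeping rather than anything conceptual: making the cyclic arc indexing unambiguous (each arc represented once, with a clean convention for its two endpoints and for the degenerate single-point and full-circle cases) and ensuring that the append step correctly records at which endpoint $v_k$ lands, so that the outgoing edge $(v_k,v_{k+1})$ is later tested from the right position. Once the state is pinned down, both the direction-consistency check and the planarity argument reduce to the $O(1)$ per-transition work claimed in the analysis.
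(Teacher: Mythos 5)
Your proposal is correct and matches the paper's own argument: the paper's table $E[i,j]$, which records whether $v_i$ can sit at $p_j$ or $p_{j+i-1}$ after embedding the first $i$ vertices on the arc $p_j,\dots,p_{j+i-1}$, is exactly your arc-plus-endpoint-flag state, with the same extension-at-either-end recurrence, the same $O(1)$ direction-consistency test per transition, the same termination condition on full-length arcs, and the same $O(n^2)$ bound. No substantive difference beyond indexing conventions.
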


% moved by *TH*
\begin{proof}
Let $v_1,\dots,v_n$ be the vertices of $P$ and let $t(S)=p_1,\dots,p_n$ be the points of $S$ in counterclockwise order.
Our dynamic programming algorithm stores values $E[i,j]$, which are all possible positions of vertex $v_i$ in a PDCE of $P_{1,i-1}$ (the subpath of $P$ including the first $i$ vertices) on the points $p_j,\dots,p_{j+i-1}$, where $j+i-1$ is taken modulo $n$ if it is greater than $n$. Notice that $E[i,j]$ can contain at most two values, $j$ and $j+i-1$, since these are the only positions for $v_i$ such that the path $P_{1,i-2}$ satisfies the necessary condition of Lemma~\ref{lemma:planar}.

We compute the value $E[i,j]$ as follows. Value $E[i,j]$ contains $j$, if $E[i-1,j+1]$ is non-empty and for at least one of the positions of $v_{i-1}$ given by $E[i-1,j+1]$ the edge $(v_{i-1},v_i)$ is direction-consistent when $v_i$ is placed on $p_j$ (see Fig.~\ref{fig:decision_j} for an illustration of this case). Value $E[i,j]$ contains $j+i-1$, if $E[i-1,j]$ is non-empty and for at least one of the positions of $v_{i-1}$ given by $E[i-1,j]$ the edge $(v_{i-1},v_i)$ is direction-consistent when $v_i$ is placed on $p_{j+i-1}$ (see Fig.~\ref{fig:decision_j_i}).
\begin{figure}
\centering
\subfigure[]{\label{fig:decision_j}\includegraphics[scale=1]{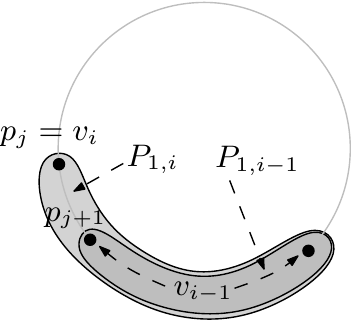}}
\hspace{3cm}
\subfigure[]{\label{fig:decision_j_i}\includegraphics[scale=1]{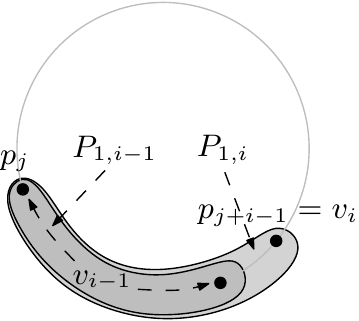}}
\caption{(a) Case when  $E[i,j]$ contains $j$. (b) Case when  $E[i,j]$ contains $j+i-1$.}
\end{figure}

The path $P$ admits a PDCE on $S$ if and only if at least one of the values $E[n,j]$, $1\leq j \leq n$ is non-empty.
Finally, we observe that we need $O(n^2)$ time to compute all the values $E[i,j]$, $1\leq i,j \leq n$. %\qed
\end{proof}

\section{Conclusion}
\label{sec:conclusion}

We investigated the question of finding a planar direction-consistent embedding on a convex point set for any given four-directional path.
We have shown that this is always possible for paths that are restricted to at most three out of the four directions.
To the contrary, we have provided an example showing that for paths using all four directions, this is not always possible.
We also presented an $O(n^2)$ time algorithm to decide embeddability for a given four-directional path and convex point set.

The most challenging open problem is to determine whether any two- or three-directional path always admits a planar direction-consistent embedding on any point set in general position.

\bibliographystyle{abbrv}
\bibliography{bibliography}

\end{document}